\newtheorem{theorem}{Theorem}  % Theorem numbering (1,2,3,...)
\newtheorem{remark}{Remark}
\newtheorem{definition}{Definition}
\DeclareMathOperator*{\argmin}{arg\,min}
\DeclareMathOperator*{\argmax}{arg\,max}
\title{ 
FDR Control via Neural Networks under Covariate-Dependent Symmetric Nulls}
\author{Taehyoung Kim, Seohwa Hwang\thanks{Email: westshine28@snu.ac.kr}  and Junyong Park\\
{\normalsize \it Department of Statistics, Seoul National 
University, Seoul, Korea}
}
\date{}
\begin{document}
\maketitle
\begin{abstract}
{
In modern multiple hypothesis testing, the availability of additional covariate information alongside the primary test statistics has motivated the development of more powerful and adaptive inference methods. However, most existing approaches rely on $p$-values that are precomputed under the assumption that their null distributions are independent of the covariates. In this paper, we propose a novel framework that first derives covariate-adaptive $p$-values  from 
the assumption of symmetric null distribution  of the primary variable given the covariates, without imposing any parametric assumptions. Building on these data-driven $p$-values, we then employ a neural network model to learn a covariate-adaptive rejection threshold via the mirror estimation principle, optimizing the number of discoveries while maintaining valid false discovery rate (FDR) control.
Furthermore, our estimation of the conditional null distribution enables the computation of 
p-values directly from the raw data.
The proposed method not only provides a principled way to derive covariate-adjusted 
p-values from raw data but also allows seamless integration with previously established p-value–based methods.
Simulation studies demonstrate that the proposed method outperforms existing approaches in terms of  powers. We further illustrate its applicability through two real data analyses—age-specific blood pressure data and U.S. air pollution data.}

%In modern multiple hypothesis testing, test statistics or the primary variable of interest are provided with additional information. However, most existing methods are developed to be applicable to the p-value, under the assumption that the null distributions are unrelated to the covariate. In this paper, we propose a novel method that derives covariate adaptive p-values without any parametric assumption, then, as done in some previous works, uses a neural network model to learn a covariate adaptive threshold via mirror estimate, to solve the optimization problem of maximizing the rejections while controlling the false discovery rate. Using the derived p-values, our method showed the best performance compared to other existing methods, suggesting a FDR control method that could be applied directly at the raw data level.
\end{abstract}
\noindent\textbf{Keywords}: Multiple Testing, False Discovery Rate, Trimmed Mean Estimation, Covariate Adjusted P-value, Deep Learning

\section{Introduction}

Multiple hypothesis testing has become a cornerstone of modern statistical inference, especially in large-scale scientific studies where thousands of hypotheses are examined simultaneously. The control of the false discovery rate (FDR) has emerged as a standard criterion to balance the trade-off between discovering true signals and limiting false positives. Traditional FDR control procedures, such as the Benjamini–Hochberg (BH) method in \cite{benjamini1995controlling},
are built on the assumption that, under the null hypothesis, the $p$-values follow a uniform or super-uniform distribution in the interval $[0,1]$. This assumption is reasonable when valid $p$-values can be readily computed from a known null distribution that does not depend on any additional variables.

In many modern applications, however, test statistics are accompanied by auxiliary or covariate information that is informative about the underlying structure of the testing problem. Incorporating such information has been shown to improve power while maintaining rigorous FDR control, leading to a growing literature on covariate-adaptive multiple testing methods,  
for example \cite{boca2018regression}, \cite{ignatiadis2016data} and \cite{lei2018adapt}.  
Despite this progress, most existing approaches implicitly assume that valid $p$-values---already adjusted for covariate effects---are available as input. This assumption, however, is often unrealistic. The core challenge arises when the null distribution of the test statistic itself depends on the covariates, making the computation of valid $p$-values a non-trivial task. { As a motivating example, consider blood pressure, which cannot be interpreted uniformly across all age groups. For a young adult, a systolic blood pressure of 140 mmHg can be a significant indicator of an abnormal condition, whereas for an individual in their eighties, a similar reading might simply reflect natural aging. This age-related variation means that the ``null'' distribution of blood pressure inherently shifts with the covariate, and consequently, naively computing $p$-values against a single global null distribution may lead to invalid inference.
}

To address this challenge, we propose a methodology that controls the FDR directly at the raw data level. Our approach is founded on the assumption that the conditional distribution of the data, given the covariate under the null hypothesis, is symmetric. This symmetry allows us to adopt the idea of a ``mirroring property,'' a nonparametric approach for controlling the FDR in \cite{Dai2022FDRDataSplitting}  rather than relying on parametric model assumptions. 
This principle is paired with a ``zero assumption,'' in the spirit of \cite{efron2004large}, which posits that alternative hypotheses are predominantly one-sided. 
The practical implementation of this framework hinges on accurately identifying the center of symmetry, a key challenge since the observed data are a mixture of null and alternative hypotheses. 
Therefore, our primary contribution is a novel, data-dependent procedure to estimate this function of centers. Unlike the original work of \cite{efron2004large}, which assumed a Normal distribution and used a pre-determined interval for estimation without covariates, our method is non-parametric, accommodates covariates, and iteratively filters data points likely generated from the alternative hypothesis. 
%However, a key challenge lies in accurately identifying the center of this null distribution conditional on the covariate. Estimating this center function is difficult because the observed data represent a mixture of samples generated under both the null and alternative hypotheses.
%To implement this, we develop a method to filter data generated from the alternative hypothesis to estimate the function of the center given the covariate under the null hypothesis. 
%In the estimation of the null distribution 
%of test statistics,
%As related work, when there is no covariate, \cite{efron2004large} uses the normal distribution with unknonw mean and variance as a null distribution.  
%The null distribution is estimated using the predetermined data which is expected to include the data from the null hypohtesis. 
%In our paper, we do not assume any parametric model; instead, our approach becomes nonparametric by estimating the center when covariate is present. Furthermore, we adopt a data-dependent procedure that iteratively excludes data points likely generated under the alternative hypothesis while  \cite{efron2004large} used a predetermined interval  to estimate the null distribution. 
Once the function of the center conditioning on the covariate 
is obtained, we derive a p-value from the estimated information and then find a boundary for rejection. Both the p-values and the boundaries are functions of the covariate, where 
the number of rejections is optimized with controlling FDR.

%However, the proposed method also can present a way to compute $p$-values so that conventional approaches can be applied.
{ The utility of our proposed framework extends beyond its primary function of controlling FDR from raw data. Its applicability is not limited to inherently symmetric distributions; for instance, in our analysis of air pollution data, we demonstrate that our algorithm can be applied to daily PM2.5 concentrations, which are typically right-skewed, after being transformed to satisfy our symmetry assumption. This showcases the method's flexibility in handling a wide range of real-world data. }
%Furthermore, while our procedure is designed to operate directly on test statistics, 
Our proposed method to identify the center of the primary variable given the covariate 
provides a principled way to compute covariate-adjusted $p$-values. This feature ensures that our findings are compatible with and can be used with conventional $p$-value–based approaches. We present the complete methodology and its evaluation through simulations and real-data examples.

{
This paper is organized as follows. 
Section~\ref{sec:related} reviews existing covariate-adaptive FDR procedures and discusses their limitations arising from the assumption of given covariate-independent $p$-values. 
Section~\ref{sec:method} introduces our proposed framework, which performs multiple testing at the level of raw data under conditional symmetry and zero assumptions. 
We describe the algorithm for estimating the conditional center function  through iterative trimming and the subsequent computation of empirical $p$-values based on the estimated null distribution. The neural-network–based optimization procedure for learning a covariate-adaptive rejection threshold is also detailed. 
Section~\ref{sec:simulation} presents extensive simulation studies that compare our method with existing procedures, which are developed for p-values, in various covariate-dependent data-generating settings, demonstrating superior power and accurate FDR control. 
Section~\ref{sec:realdata} provides two real-data applications—the age-specific blood-pressure data and U.S.\ air-pollution data—which illustrate the flexibility and practical usefulness of the proposed approach. 
Finally, Section~\ref{sec:conclusion} concludes with a discussion of implications, possible extensions, and directions for future research.}

\section{Related Work}
\label{sec:related}
The classical false discovery rate (FDR) control procedure, introduced by \citet{benjamini1995controlling}, treats all hypotheses symmetrically, disregarding any additional information that may be available for each hypothesis. However, in many large-scale testing problems, hypotheses are often accompanied by covariates that may be informative about the likelihood of a hypothesis being null or alternative. Incorporating such covariates has been shown to substantially improve detection power while maintaining control over the FDR.

One of the early systematic approaches to considering covariates in FDR control is the Independent Hypothesis Weighting (IHW) method proposed by \citet{ignatiadis2016data}. IHW partitions hypotheses based on covariate information and assigns data-driven weights to each partition, enhancing the detection of true signals while rigorously controlling the FDR. The method assumes that the covariate is independent of the $p$-value under the null hypothesis, a condition that ensures the validity of the FDR control.

{Some other approaches are the AdaPT framework by \citet{lei2018adapt} and AdaFDR framework by \citet{zhang2019fast}.
The AdaPT framework, which provides an interactive, model-free procedure for multiple testing with side information, adaptively estimates rejection thresholds based on covariates during sequential testing and maintains FDR control under mild assumptions. A main advantage of AdaPT is its flexibility, allowing incorporation of complex models, such as generalized additive models (GAMs) or gradient boosted trees, to capture the relationship between covariates and local false discovery rates. The AdaFDR framework, which uses Gaussian mixture models to learn covariate-dependent thresholds, offers a computationally efficient for high-dimensional covariates and guarantees FDR control under independence conditions.}

% Another approach is the AdaPT framework by \citet{lei2018adapt}, which provides an interactive, model-free procedure for multiple testing with side information. AdaPT adaptively estimates rejection thresholds based on covariates during sequential testing and maintains FDR control under mild assumptions. A main advantage of AdaPT is its flexibility, allowing incorporation of complex models, such as generalized additive models (GAMs) or gradient boosted trees, to capture the relationship between covariates and local false discovery rates.

NeuralFDR by \citet{xia2020neuralfdr} introduced a deep learning approach to FDR control by learning an optimal threshold function from covariates through a neural network. NeuralFDR estimates a mapping from covariates to $p$-value thresholds and guaranties asymptotic FDR control under certain regularity conditions. This method can flexibly capture complex covariate effects and is particularly powerful in high-dimensional or highly nonlinear settings where traditional parametric models may fail.

Despite these advances, most existing methods share a fundamental assumption: under the null hypothesis, the $p$-values are uniformly distributed on $[0,1]$ and independent of covariates. NeuralFDR in \citet{xia2020neuralfdr} relaxes this assumption slightly, requiring only that the conditional distribution of null $p$-values be symmetric about $1/2$. Nevertheless, the underlying paradigm remains that the $p$-values are already computed in a manner that eliminates dependence on covariates.
A limitation of this paradigm is that it implicitly assumes access to the conditional null distribution of the raw data given the covariates, from which valid $p$-values can be derived. However, in many applications, such conditional distributions are unknown and cannot be computed exactly.  
This motivates a shift in perspective: instead of taking $p$-values as primitives, one may directly estimate the joint distribution of the raw data and covariates under the null hypothesis. From this,  we can construct covariate-calibrated test statistics or $p$-values that more faithfully reflect the dependence structure in the data. Unlike existing methods such as IHW, AdaPT, AdaFDR, and NeuralFDR, which operate on precomputed $p$-values, this approach emphasizes estimation of the conditional null distribution as a prerequisite for valid multiple testing with covariates.

%Despite these advancements, challenges remain, particularly in ensuring robustness to model misspecification and handling $p$-values depending on the covariate. 

%As noted in the introduction, a drawback of these studies lies in the assumption that, under the null hypothesis, the p-values are uniformly distributed over $[0,1]$
%independently of the covariates. In some extended frameworks, this assumption is slightly relaxed to the requirement that, for arbitrary covariates, the distribution is symmetric about $1/2$, but not necessarily uniform distribution.  In other words, the requirement on the null distribution 
%is the conditional distribution of $p$-value given the covariates. 

%\section{Preliminaries}
%False discovery rate

\section{Proposed Methodology}
\label{sec:method}
\subsection{Problem Definition}
Suppose that the observed data points are
$(X_i, Y_i), X_i = (x_{i1}, \dots , x_{id}) \in \mathbb{R}^d$ 
for $1\leq i \leq n$
where $d$ is the dimension of the covariate.  
We are interested in testing 
$H_{0i}$ vs. $H_{1i}$ based on $Y_i$ for $1\leq i \leq n$, but 
we have additional information $X_i$ that 
makes the decision boundary different from 
traditional multiple testing without covariates. 
We want to achieve two goals:   
$(i)$ the control of  a Type I error rate such as 
the false discovery rate $(FDR)$ 
 \begin{eqnarray}
 FDR =  E(FDP) =  E\left(\frac{V}{R \vee 1} \right) 
 \label{eqn:FDR}
\end{eqnarray}
 where $FDP$ represents the false discovery proportion $\frac{V}{R \vee 1}$,   
  $R$ is the number of rejections or discoveries among $n$ hypotheses 
 testing and $V$ is the number of false rejections and   
  $(ii)$   maximization of the number of rejections $R$.  
{More specifically, 
we define the conditional densities 
of $y$ given $X=x$
under the null  
and alternative hypotheses, 
$f_{0,x}(y) \equiv f_0(y|x)$ 
and $f_{1,x}(y)\equiv f_1(y|x)$ and denote $f_x (y)$ as the mixed distribution of 
 $f_{0,x}(y)$ and $f_{1,x}(y)$ 
 with mixing proportions $1-\epsilon_x$ and $\epsilon_x$
 at $x$ :}
\begin{equation}
    f_x (y) = (1-\epsilon_x) f_{0,x}(y) + \epsilon_x f_{1,x}(y).
\label{eqn:modeldef}
\end{equation}

An example 
is the case where  
$p$-values are given 
which has been considered in many studies 
such as \cite{xia2020neuralfdr}, \cite{ignatiadis2016data} and \cite{lei2018adapt} etc. 
In this case,  
most existing studies that assume $Y$ to be a $p$-value use 
 $f_{0,x} (p)= {\rm I}_{[0,1]}(p)$, 
 the uniform distribution, regardless of the covariate $x$. 
 This implies that the covariate is independent of $p$-values under the null hypothesis, i.e., the covariates have been adjusted during the computation of $p$-values.  
However, such an assumption is strong and practically unrealistic, 
since there is very little literature describing how these $p$-values are actually computed from the raw data.  
In many real applications, $p$-values are not available in a covariate-adjusted form, and the relationship between the covariate and the response must be modeled directly from the data. 
Our proposed framework provides a novel contribution by addressing this gap: 
we present multiple testing starting at the level of the raw data $(X_i, Y_i)$, 
rather than relying on precomputed or assumed covariate-adjusted $p$-values. 
This approach allows us to explicitly model and learn the dependence of the test statistic on the covariates, 
resulting in a more principled and data-driven inference procedure that extends beyond the scope of existing $p$-value-based methods.

\begin{comment}
However, this may be a somewhat strong condition,   
so there has been some extension. For example,  
\cite{xia2020neuralfdr}
proposed a FDR controlling procedure using neural networks, 
assuming that $f_{0,x}$ is symmetric about $1/2$, i.e, 
$f_{0,x}(p) = f_{0,x}(1-p)$ for given $x$, 
but not necessarily the uniform distribution. 
\end{comment}

%With this symmetric property of $f_{0,x}$ about $1/2$,   
% the mirroring property has been defined as follows : 
%for a given critical value $t(x_i)$, 
%the number of false discovery $V$ 
%can be approximated 
%by $|\{ i:  p_i >1-t(x_i)\}|$
%since 
%$V=|\{i : p_i <t(x_i),  H_{0i} \mbox{ is true} \}| \approx |\{ i:  p_i >1-t(x_i), H_{0i}\mbox{ is true} \}| \approx |\{ i:  p_i >1-t(x_i)\}|$ due to the fact that $f_{0,x}(y)$ is symmetric about $1/2$. 
%On the other hand,  
%the number of rejections $R$ is obtained from 
%$ |\{i: p_i <t(x_i) \}|$.  
%Summarizing existing studies,
%the null distribution of the $p$-value under the null hypothesis is assumed to be symmetric about a fixed value such as $1/2$.  

In this paper, we propose a method that operates directly at the raw data level rather than assuming precomputed $p$-values. 
In this case, the null distribution of $Y$ given covariates $X=x$ may itself depend on these covariates.
Unlike existing methods that impose a specific parametric form (e.g., normality or exponential family assumptions) for $f_{0,x}(y)$, 
we do not assume any parametric structure. 
Instead, we adopt a more relaxed and flexible assumption that the null distribution is \emph{symmetric} about a covariate-dependent center $m(x)$ as follows: 
 \begin{eqnarray}
   f_{0,x}(y) = f_{0,x}(2m(x)-y). 
 \end{eqnarray}
This assumption of symmetry reflects the inherent balance of deviations around $m(x)$ under the null hypothesis, 
while allowing for heterogeneity in scale and shape across different covariate values. 
By working directly with the raw data $(X_i, Y_i)$ in this nonparametric and covariate-dependent framework, 
our proposed procedure extends beyond conventional $p$-value-based multiple testing methods 
and provides a more general and data-adaptive approach to FDR control.  
This modeling flexibility enables us to capture complex relationships between the covariate and response distributions 
without requiring explicit specification of the null model.
Furthermore, we introduce a zero assumption considered in \cite{efron2004large} to avoid the identifiability issue of discriminating $f_{0,x}$ and $f_{1,x}$. 
Considering these aspects, we introduce the following two assumptions.  which are 
compatible with or   
weaker than the traditional requirements. 
\begin{comment}
\begin{definition} (Semi-Superuniform distribution)   
For $0\leq p\leq 1/2$, the marginal $p$-value is 
semi-Superuniform distribution if 
\begin{eqnarray}
P_{H_0}( P\leq p)=\int_0^p \int f_0(p|x)g(x)dxdp   \leq p. 
\label{eqn:semi_superuniform}
\end{eqnarray}
\end{definition}
The Semi-superuniform distribution 
is weaker than the superuniform distribution 
which satisfies \eqref{eqn:semi_superuniform} for all $0\leq p \leq 1$. 
\end{comment}

{\bf Assumptions:}
\begin{enumerate}
    \item[{\bf A1.}] The conditional null distribution is symmetric around the function of center $m(x)$, i.e., 
    \begin{eqnarray}
      f_{0,x}(y) = h(y-m(x))
      \label{eqn:symmetric}
    \end{eqnarray}
    where $h(y)$ is a density function symmetric around zero. 
%    \item [{\bf C2.}]The alternative distribution $f_{1,x}(y)$  is a decreasing function of $y$. 
    \item [{\bf A2.}] (Zero Assumption) 
    The support of $f_{1,x}(y)$ is     
  $ (a(x), \infty)$ for some function   $a(x) > m(x)$.  
%    \item  The supports of  the  two distributions do not overlap a lot for $a < m(x)$
\end{enumerate}
\begin{comment}
or if $Y$ is the $p$-value, then 
\begin{enumerate}
    \item[{\bf C1`.}]  The conditional null distribution is symmetric around the function of center $m(x)$, i.e., 
    \begin{eqnarray}
      f_{0,x}(p) = g(p-m(x))
    \end{eqnarray}
    where $g(y)$ is a density function  symmetric around zero. 
%    \item [{\bf C2.}]The alternative distribution $f_{1,x}(y)$  is a decreasing function of $y$. 
    \item [{\bf C2`.}] (Zero Assumption) 
    The support of $f_{1,x}(p)$ is     
  $ [0,b(x)]$ for some function   $b(x) < m(x)$.  
%    \item  The supports of  the  two distributions do not overlap a lot for $a < m(x)$
\end{enumerate}
\end{comment}
\bigskip 

A main contribution of our work is 
the estimation of the center of the conditional null distribution, $m(x)$. 
{A relevant study is the estimation of 
the empirical null distribution. 
In \cite{efron2004large}, the estimation of the empirical null distribution provides a way to identify the center of the null distribution, but this is done under the assumption that the empirical null is normal and without considering covariates. In our setting, however, the covariate must be incorporated in estimating the center of the conditional null distribution, and we avoid assuming normality or any other parametric model. While the underlying idea is inherited from the conditional maximum likelihood approach in Efron (2004), we cannot directly apply MLE since no parametric form for the null distribution is imposed. Moreover, unlike \cite{efron2004large}, which relies on a predetermined interval that includes the data generated only from the null hypothesis 
to construct the conditional likelihood, our approach adaptively selects the interval to better reflect the data structure.
}The choice of the interval  
presents a significant challenge because our data consist of both null and alternative cases, yet the estimation requires isolating and utilizing only the subset of data generated under the null hypothesis. Effectively distinguishing and leveraging these null-generated data is crucial for accurate estimation and, consequently, for controlling the false discovery rate. 
Furthermore, depending on how we characterize the symmetric property of $f_{0,x}$ around $m(x)$, 
we propose an algorithm to estimate $m(x)$ in the following section. 
%We first estimate the mean of the null distribution in two ways. Then, we learn the threshold that controls the false discovery rate through deep learning.

\subsection{Proposed Algorithm for Estimating $m(x)$}
In our proposed method, a baseline idea is data trimming, which involves determining how to effectively trim the data to select those generated from the conditional null distribution of $y$ given $X$, or, more precisely, the neighborhood of $X$.
One main difficulty in estimating the center $m(x)$ is that the data consists of those derived from the null and alternative hypotheses, while we should only use the data from the null hypotheses.

Figure \ref{fig:alg1justify} compares two estimation approaches for the center function  $m(x)$ : one that estimates $m(x)$  without separating data under the null and alternative hypotheses, and our proposed method 
(which will be shown later) that estimates $m(x)$ after trimming observations from both sides appropriately.
It is evident that if the data from the alternative hypothesis are not removed, the estimated center becomes biased toward the regions where alternative data are present.
This bias distorts the estimation of the true center under the null hypothesis and consequently affects the overall results of multiple testing conservatively.
Our trimming-based approach mitigates this issue by reducing the influence of alternative samples in estimating  $m(x)$.
%\sout{For this reason, we should not use all the data to estimate the center $m(x)$; otherwise $m(x)$ is overestimated, leading the decisions for multiple testing to conservative results.}
%, since the bias of the estimated mean will be twice when the mirror estimator is applied.

\begin{figure}[ht]
    \centering
    \includegraphics[width=0.95\linewidth]{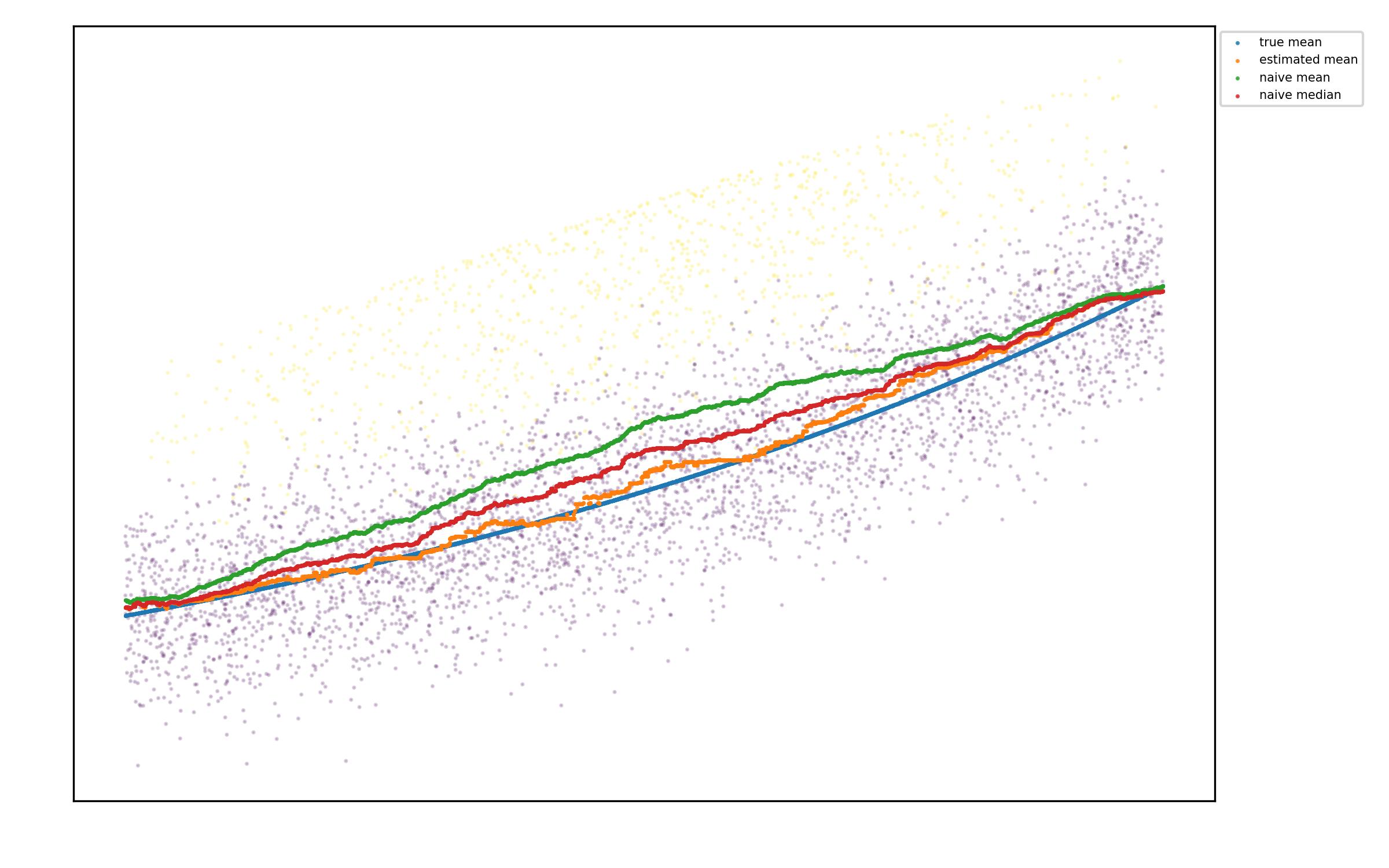}
    \caption{naive estimation of the mean(median) vs proposed method}
    \label{fig:alg1justify}
\end{figure}

Our proposed method for the estimation of $m(x)$ in $x = X_i$ is based on nearby data points of $(X_i, y_i)$. We first start by selecting the local neighborhood based on the available data as follows. For given $i$ and $\delta>0$, define
\begin{equation}
    {\cal N}_{\delta} (X_i) =\{j : |X_i - X_j| < \delta\}
\end{equation}
where $\delta$ is any given  bandwidth. 
In addition, we also define the collection of $Y_j$'s for
$j \in {\cal N}_{\delta}(X_i)$ which is
\begin{equation}
{\cal Y}_{\delta} (X_i) 
=\{Y_j : j \in {\cal N}_{\delta}(X_i)\}. 
\end{equation}

Among the primary values of $Y_j \in {\cal Y}_{\delta}(X_i)$, those who follow the null distribution should be approximately symmetric around $m(X_i)$ for some sufficiently small $\delta>0$.
Since $f_{0,x}$ is assumed to be symmetric around $m(x)$,
we need to quantify this symmetry based on data $Y_j \in {\cal Y}_{\delta} (X_i)$.  
As a method to estimate the center solely from the null distribution, we consider cases where the difference between the sample mean and the sample median remains below a certain acceptable level.
Since the closer the data within the selected interval are to being symmetric, the closer these two statistics will be, we propose an algorithm that iteratively removes tail values until the two statistics are sufficiently close. This process is expected to effectively remove values from alternative distributions and null distributions of the corresponding region.
Using this criterion, we truncate data that meet a suitable condition under the null hypothesis and propose an algorithm that estimates $m(x)$ based on the center of the truncated data.
We provide more details of our proposed algorithm in the following paragraphs and {\bf Algorithm}~\ref{alg:alg1}.

For each fixed $x$, to test the symmetry, we calculate the test statistic $T$, which was introduced in \citet{miao2006sym},
\begin{equation}
    T = \frac{\hat{\mu}(x) - \hat{\nu}(x)}{\hat{d}(x)}
\label{TS}
\end{equation}
where 
\begin{equation}
    \hat{\mu}(x) = \frac{\sum_{j \in {\cal N}_\delta(x)} Y_j}{n},
\label{eq:mu}
\end{equation}
\begin{equation}
    \hat{\nu}(x) = \argmin_{m} \sum_{j \in {\cal N}_\delta(x)} |y_j - m|,
\label{eq:nu}
\end{equation}
\begin{equation}
    \hat{d}(x) = \sqrt{\frac{\pi}{2}} \\ \frac{\sum_{j \in {\cal N}_\delta(x)} |Y_j - \hat{\nu}(x)|}{n}.
\label{eq:mad}
\end{equation}
In other words, 
$\hat{\mu}(x)$ and $\hat{\nu}(x)$ are the sample mean and  
the sample median of $Y_j$'s in  ${\cal N}_{\delta}(x)$, respectively, 
and  similarly, the estimate of a dispersion measure $\hat{d}(x)$  is the mean absolute deviation around the median of $y_j$'s in  ${\cal N}_{\delta}(x)$.

%\sout{(Remark: $\hat{d}(x)$ is a consistent estimator of the standard deviation when the underlying distribution is a Gaussian distribution.) \textcolor{blue}{necessary?}}

The motivation for the algorithm comes from Corollary 1 of \citet{miao2006sym}. The asymptotic distribution of $T$ is given as
\begin{equation}
    \sqrt{n} \ T \xrightarrow{d} N(0, \sigma_T^2)
\end{equation}
where
\begin{equation}
    \sigma_T^2 = \frac{2}{\pi\tau^2}\left(\sigma^2 + \frac{1}{4f(\nu)^2} - \frac{\tau}{f(\nu)}\right),
\end{equation}
if the underlying distribution is symmetric under certain regularity conditions. {See {\bf Theorem}~\ref{thm:cond} for more details. }

Next, as suggested in \citet{miao2006sym}, using the kernel density estimator, we estimate $\sigma_T^2$ as
\begin{equation}
    \hat{\sigma}_T^2 = \frac{2}{\pi\hat{\tau}^2(x)}\left(\hat{\sigma}^2(x) + \frac{1}{4\hat{f}^2(\hat{\nu}(x))} - \frac{\hat{\tau}(x)}{\hat{f}(\hat{\nu}(x))} \right)
\label{eq:sigmaT}
\end{equation}
where
\begin{align}
    &\hat{\sigma}^2(x) = \frac{1}{|{\cal N}_\delta(x)|} \sum_{j \in {\cal N}_\delta(x)} (Y_j - \hat{\mu}(x))^2, \\
    &\hat{\gamma}(x) = \frac{1}{|{\cal N}_\delta(x)|} \sum_{j \in {\cal N}_\delta(x)} Y_j I(Y_j < \hat{\nu}(x)),\\
    &\hat{\tau}(x) = \hat{\mu}(x) - 2\hat{\gamma}(x), \\
    & \hat{f}(y) = \frac{1}{|{\cal N}_\delta(x)|}\sum_{j \in {\cal N}_\delta(x)} K_{h_y}(y - Y_j).
\end{align}

% \textcolor{blue}{Thus the following conditions are also needed for the kernel density estimator to be a consistent estimator.
% \begin{enumerate}
%     \item[{\bf S1.}] $m(\cdot)$ is Lipschitz on the compact set of interest; $f_{0,x}(y)$ is continuous and $f_{0,x}(m(x))>0$.
%     \item[{\bf S2.}] $h_x \rightarrow 0, nh_x \rightarrow \infty $ and, for the kernel estimator of $f(\nu)$, $h_y \rightarrow 0$, $Nh_y \rightarrow \infty$.
% \end{enumerate}
% }
We propose to use the test statistic
$T$ as a criterion for data trimming 
 used in the {\bf Algorithm}\ref{alg:alg1}, determining which observations should be excluded in order to estimate the center function 
$m(x)$ based primarily on data generated under the null hypothesis as follows. If $T>0$, or equivalently, $\hat{\mu}(x)$ is larger than $\hat{\nu}(x)$, it implies that the distribution is skewed to the right, indicating that there exist outliers in the right tail. Thus, we remove the largest value among ${\cal N}_\delta(x)$. Similarly, $T$ being significantly smaller than 0 would imply that there exist outliers on the left tail. 
As a remedy for the heavy computation, for the iterative process is to be done for each data point, we may exclude several data points at once. As shown in Figure \ref{fig:rm1}, \ref{fig:rm5} below, such a method does not show significant difference in the result; nevertheless, it reduces the computation time meaningfully by reducing the total number of iterations.
\begin{figure}[ht]
    \centering
    \begin{minipage}{0.45\linewidth}
        \centering
        \includegraphics[width=\linewidth]{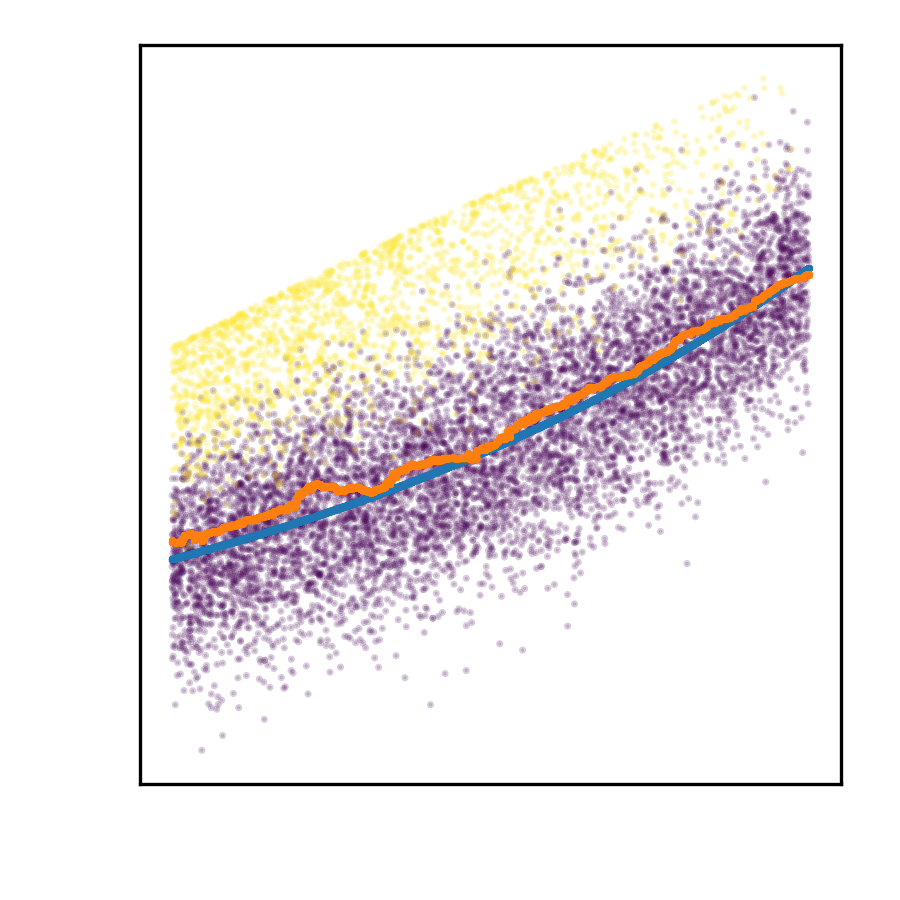}
        \caption{Remove one value at once}
        \label{fig:rm1}
    \end{minipage}\hfill
    \begin{minipage}{0.45\linewidth}
        \centering
        \includegraphics[width=\linewidth]{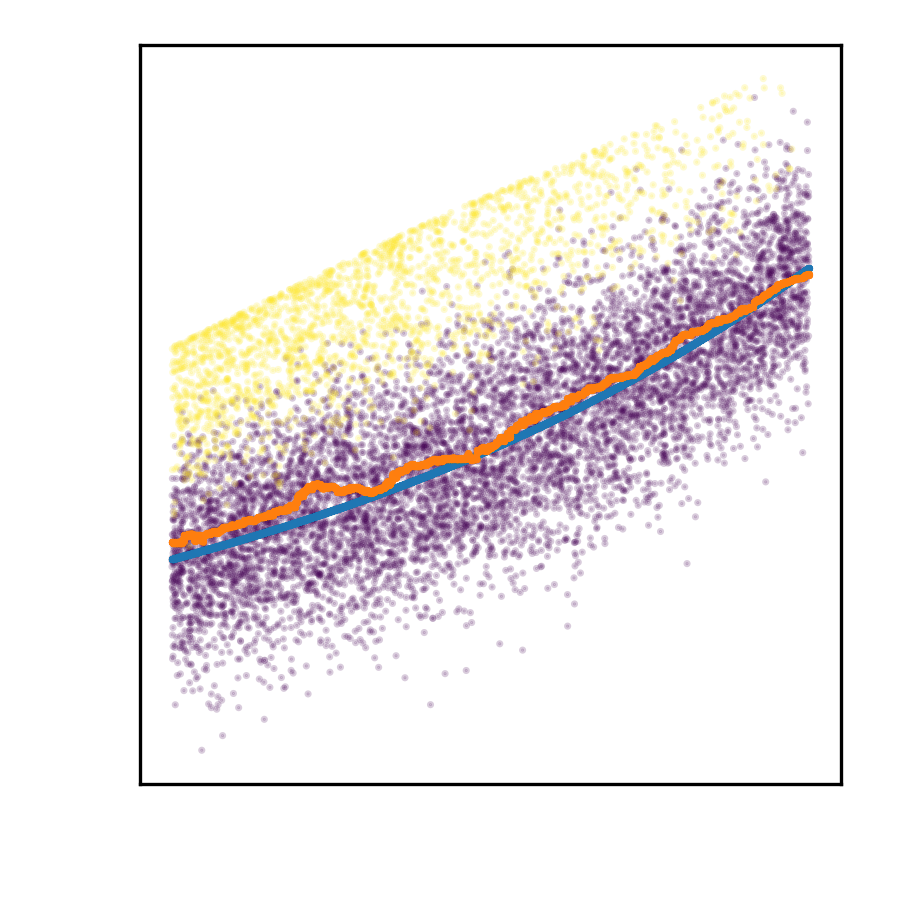}
        \caption{Remove five value at once}
        \label{fig:rm5}
    \end{minipage}
\end{figure}

This procedure is repeated until $T$ is close enough to zero. When the procedure is completed, we expect the selected data $y_j$
to approximate a symmetric distribution, specifically a symmetric truncation of the null distribution. 
We then take the median of the selected samples as the estimate of the conditional mean $\hat m(x)$ and save the largest value among those as an initial value for the threshold, since the region above the threshold is expected to contain all the alternative hypotheses.
\begin{remark}
Since the test statistic is robust and thus may still allow a few data from the alternative, we took the median instead of the mean, but in practice there were no significant differences in the result.
\end{remark}

For a theoretical perspective, the {\bf Algorithm} \ref{alg:alg1} uses a similar idea with backward elimination in variable selection. If we write the elements of ${\cal Y}_{\delta}(X_i)$ as $Z_{(1)} < \cdots<Z_{(N)}$, then the step with selected data $Z_{(i)}, \ l\leq i \leq u$ can be considered as hypothesis testing for the following null hypothesis:
``There exist $L$ and $U$  such that $Z_{(l-1)} < L \leq Z_{(l)}, \ Z_{(u)} \leq U < Z_{(u+1)}$, and $f_{0,x}$ truncated on $[L, U]$ is symmetric." A theoretical analysis of the {\bf Algorithm} \ref{alg:alg1} is given in Section \ref{sec:theory}.

\begin{algorithm}[ht]
    \caption{Estimating Conditional Mean of Null Distribution}\label{alg:alg1}
    
    \KwIn{Dataset $\left\{X_j,Y_j\right\}_{j=1}^n$, Bin width $\delta$}
    \KwOut{ $\left\{m(X_j)\right\}_{j=1}^n$, $\left\{t_0\left(X_j\right)\right\}_{j=1}^n$}

    \For{$j \in \left\{1,2,\ldots,n\right\}$}{
        Set $\mathcal{N}_\delta\left(X_j\right)\gets\left\{ i : |X_i - X_j| \leq \delta\right\}$\\
        Calculate $\hat{\mu}(X_j), \hat{\nu}(X_j), \hat{\sigma}(X_j)$ and $\hat{\sigma}_T\left(X_j\right)$ with $\mathcal{N}_\delta(X_j)$\\
        Calculate $T\gets \frac{\hat{\mu}\left(X_j\right) -\hat{\nu}\left(X_j\right)}{\hat{\sigma}\left(X_j\right)}$\\
        \While{ $|T |>1.96\hat{\sigma}_T(X_j)$}{
            \eIf{$T > 1.96\hat{\sigma}_T(X_j)$}{
                Set $i^* \gets \arg\max_{i\in\mathcal{N}_{\delta}(X_j)} Y_j$\\
            }{
                Set $i^* \gets \arg\min_{i\in\mathcal{N}_{\delta}(X_j)} Y_j$\\
            }
            Update $\mathcal{N}_\delta(X_j)\gets \mathcal{N}_\delta(X_j)\backslash\left\{i^*\right\}$\\
            Update $\hat{\mu}(X_j), \hat{\nu}(X_j), \hat{\sigma}(X_j),\hat{\sigma}_T\left(X_j\right)$ and $T$ with updated $\mathcal{N}_\delta(X_j)$\\
        }
        Set $m(X_j) \gets \hat{\mu}(X_j)$ and $t_0(X_j)\gets \max\left\{Y_i : i \in \mathcal{N}_\delta(X_j)\right\}$
    }
    \KwRet{  $\left\{m(X_j)\right\}_{j=1}^n$, $\left\{t_0\left(X_j\right)\right\}_{j=1}^n$}
\end{algorithm}

\subsection{Deriving the Optimal Threshold}
After estimating the null mean function $\hat m(x)$ using the {\bf Algorithm} \ref{alg:alg1}, 
we need to decide a critical value for a given $X=x$ 
which is called the threshold function denoted by ${t}(x)$. 
Instead of using the primary variable of interest, we now derive a p-value for each observations.

First, to calculate the empirical distribution, we take only the samples in ${\cal N}_{\delta} (X_i)$ that are smaller than the estimated mean $\hat{m}(x)$ and also mirror them 
to the estimated mean $\hat m(X_i)$.  
%In other words, 
%for $y_j \leq \hat m(X_k)$ with $j \in {\cal N}_{\delta}(X_i)$, 
%we 
%define $2\hat m(X_i) - y_j$ which is the mirror of $y_j$ a $\hat %m(X_i)$. 
With $Y_j \leq \hat m(X_i)$ and their corresponding mirrored data $2\hat m(X_i) - Y_j$ about $\hat m(X_i)$, 
we define  
\begin{eqnarray*}
{\cal Y}_{\delta,0}(X_i)&\stackrel{def}{\equiv} &
\{Y_j : Y_j \leq \hat m(X_i), j\in {\cal N}_\delta(X_i) \}
\cup \{2\hat m(X_i)-Y_j : Y_j \leq \hat m(X_i), j\in {\cal N}_\delta(X_i) \}\\
&\stackrel{def}{\equiv}&  A(X_i) \cup {\cal M}(A(X_i)) 
\end{eqnarray*}
where ${\cal M}(A(X_i))$ is the mirror set or the reflected set of $A(X_i)$
about $\hat m(X_i)$. 
By mirroring  $Y_j \in A(X_i)$ to $\hat m(X_i)$, 
we obtain an approximate null distribution of $Y_j$ given  $X=X_i$ 
for the part of $y\geq \hat m(X_i)$.

%\begin{eqnarray}
%{\cal N}_{\delta,0}(X_i) =\{ Y_j,   2\hat m(X_i)-Y_j,   Y_j \leq \hat m(X_i)  \}. 
%\end{eqnarray} 
 Then, we calculate a corresponding p-value based on the empirical distribution, which is reasonable due to the assumption of a symmetric null distribution in  \eqref{eqn:symmetric}: 
 {for $Y_i$ corresponding to $X_i$}, 
\begin{eqnarray}
    p_{i} 
    &=&  \frac{1}{|{\cal Y}_{\delta,0}(X_i)|} \sum_{Y_j \in {\cal Y}_{\delta,0}(X_i)} I( Y_j > Y_i ).
\end{eqnarray}

We now estimate $t(x)$, the threshold function which maximizes
the number of rejections while controlling the false discovery proportion ($FDP$) 
which is defined  in \eqref{eqn:FDR}. 
For a given threshold function $\hat{t}(x)$, 
we define the notation which is crucial in multiple testing procedures as follows: 

\begin{eqnarray}
% R(\hat{t}) &=& \sum_{i=1}^{n} I(y_i > \hat{t}(X_i)), \\
R(\hat{t}) &=& \sum_{i=1}^{n} I(p_i < \hat{t}(X_i)), 
\label{eqn:R}\\
% V(\hat{t}) &=& \sum_{i=1}^{n} I(y_i > \hat{t}(X_i), \ H_i = 0),\\
V(\hat{t}) &=& \sum_{i=1}^{n} I(p_i < \hat{t}(X_i), \ H_i = 0),\\
% \hat{V}(\hat{t}) &=& \sum_{i=1}^{n} I(y_i < 2\hat m(X_i) - \hat{t}(X_i)), \label{eqn:hatV} \\
\hat{V}(\hat{t}) &=& \sum_{i=1}^{n} I(p_i > 1 - \hat{t}(X_i)), \label{eqn:hatV} \\
\widehat{FDP}(\hat{t}) &=& \frac{\hat{V}(\hat{t})}{R(\hat{t})}.
\label{eqn:hatFDP}
\end{eqnarray}

$R(t)$ represents the number of rejections ($R$), which is observable, while $V(t)$ is the number of false rejections ($V$), which is not directly observed since 
the true hypothesis is unknown. 
In this stage, since we need to estimate $V(t)$ for a given threshold function $t$, we use the mirroring property, assuming that the p-values follow the uniform distribution $Unif[0,1]$. 
$V(t)$ can be estimated as in \eqref{eqn:hatV}, 
which leads to the estimated false discovery proportion (FDP) in \eqref{eqn:hatFDP}. 
The remaining issue is how we estimate the threshold function $t(x)$ that satisfies some desired properties in multiple testing. 
For an estimated $\hat m(x)$, 
our goal is to estimate the threshold function 
by solving the following optimization problem:
\begin{eqnarray}
    \max_{\hat{t}} R(\hat{t}) \quad \mbox{subject to} \quad \widehat{FDP}(\hat{t}) \leq \alpha.
\end{eqnarray}

We use deep learning to optimize the threshold function that maximizes the number of rejections while controlling the estimated false discovery proportion under a given nominal level $\alpha$. During this procedure, two techniques are used for efficient learning. First, to prevent non-vanishing gradient problems, the sigmoid function 
$\sigma(x)$ is used instead of the indicator function when calculating the rejections and the false rejections. Second, 
for $t_0(x)$ obtained in the {\bf Algorithm} \ref{alg:alg1}, 
$(X_i, t_0(X_i))$ values are used as initial values, so there exists a short pre-training step for initialization. These two techniques make training stable and computationally efficient.

Writing the above in equations, we optimize the threshold function $\hat{t}(x)$ to maximize
\begin{equation}
    % R_\sigma(\hat{t}) = \sum_{i=1}^{n} \sigma (\lambda  (y_i - \hat{t}(X_i) ))
    R_\sigma(\hat{t}) = \sum_{i=1}^{n} \sigma (\lambda  (\hat{t}(X_i) - p_i))
\label{eq:Rsigma}
\end{equation}
instead of
   $ R(\hat{t}) = \sum_{i=1}^{n} I(p_i < \hat{t}(X_i))$
while the proportion of false rejection $FDP_\sigma(\hat{t})$, derived from $R_\sigma(\hat{t})$ and an estimate of false rejections $V_\sigma(\hat{t})$, is controlled:
\begin{align}
    & FDP_\sigma(\hat{t}) = \frac{V_\sigma(\hat{t})}{R_\sigma(\hat{t})} \leq \alpha,
    \label{eq:FDPsigma}
    \\
    & V_\sigma(\hat{t}) = \sum_{i=1}^{n} \sigma (\lambda (p_i - (1 - \hat{t}(X_i)) )).
    \label{eq:Vsigma}
\end{align}
Here, $\lambda$ is a constant that is arbitrarily chosen. Too large $\lambda$ could lead to vanishing gradient problems, while too small $\lambda$ can enlarge the difference between $R$ and $R_\sigma$, $V$ and $V_\sigma$.

To optimize the threshold function, we use a multilayer perceptron(MLP), which is defined recursively as in the following:
\begin{align*}
    & x^{(0)} = x, && x\in \mathbb{R}^{d} \\
    % \\
    % & x^{(0)} = (\sin y_1, \cos y_1, ..., \sin y_f, \cos y_f), & y\in \mathbb{R}^{d_f}\\
    % & y = (y_1, y_2, ..., y_f)^T = 2\pi x B^T, & x\in \mathbb{R}^{d}, B\in \mathbb{R}^{d_f \times d} \\
    % \\
    & x^{(1)} = \phi(b^{(1)} + (w^{(1)})^T x^{(0)}), && b^{(1)} \in \mathbb{R}^{d_{h}}, &&w^{(1)} \in \mathbb{R}^{d \times d_{h}} \\
    & x^{(k)} = \phi(b^{(k)} + (w^{(k)})^T x^{(k-1)}), && b^{(k)} \in \mathbb{R}^{d_{h}}, &&w^{(k)} \in \mathbb{R}^{d_{h} \times d_{h}} \\
    & x^{(K)} = \phi(b^{(K)} + (w^{(K)})^T x^{(K-1)}),  && b^{(K)} \in \mathbb{R}, &&w^{(K)} \in \mathbb{R}^{d_{h}} \\
    & \hat{t}(x) = \sigma(x^{(K)})
\end{align*}
for $2\leq k \leq K-1$. For clarification, the activation function $\phi(\cdot)$ is applied entry-wise. Furthermore, the tanh function is selected as the activation function $\phi(x) = \frac{e^x - e^{-x}}{e^x + e^{-x}}$, and the Sigmoid function is applied last, since we scale the data set to [0,1] before training. Here, $d, d_{h}$ are, respectively, the dimensions of the covariate and the hidden node.
First, unlike \citet{xia2020neuralfdr}, we do not divide the data set into train, validation, and test data sets, but rather use the whole set for training and testing. This is available because the p-values we derived are based on the empirical distribution and the assumption of \eqref{eqn:symmetric} in  $\mathbf{A1}$. Next, instead of adding a penalty term of $V_\sigma(t) - \alpha R_\sigma(t)$, we use an augmented Lagrangian optimization scheme introduced in \citet{nocedal2006numerical}, for faster convergence.
Specifically, we take the loss function
\begin{equation}
    L(\hat{t}, \lambda_2) = -R_\sigma(\hat{t}) + \lambda_2 (V_\sigma(\hat{t}) - \alpha R_\sigma(\hat{t})) + \frac{\rho}{2}(V_\sigma(\hat{t})- \alpha R_\sigma(\hat{t}))^2
\end{equation}
and update $\lambda_2$ every epoch, using the full data, as follows:
\begin{equation}
    \lambda_2 \leftarrow \max(0, \lambda_2 + \eta(V_\sigma(\hat{t})- \alpha R_\sigma(\hat{t}))).
\end{equation}

\begin{remark}
Applying base knowledge to the network model is possible for the network model. For example, if periodicity with respect to the covariate is expected, we may add a layer of Fourier-featuring(\citet{tancik2020fourier}) to the multilayer perceptron model. However, in the simulation settings in Section \ref{sec:simulation}, adding a layer of Fourier features did not enhance performance, implying that the layer is redundant.    
\end{remark}

For the simulations in  the following section \ref{sec:simulation}, since the covariates are in $\mathbb{R}^1$, we simply use the two layer MLP, for it turned out that they do not show significant difference with a deeper MLP considering the heavier computation time.

%\begin{algorithm}
%\caption{Estimating the threshold function $t(x)$}
%\begin{algorithmic}[1]
%\State Pre-train $t(x)$ to approximate the initial values $(X_i, t_0(X_i)) $ for %$1\leq i \leq n$
%\State Train $t(x)$ based on the training data by optimizing
%\Statex \parbox{\linewidth}{\centering \textbf{Maximize } $\hat{D}(t) $ \textbf{subject to} $ \widehat{FDP}(t) \leq \alpha$}
%\Statex using \eqref{eq:R}, \eqref{eq:V}, \eqref{eq:FDP}
%\State Find the rejections $\{i : y_i > t(X_i)) \}$
%\end{algorithmic}
%\end{algorithm}

\begin{algorithm}[H]
    \caption{Estimating the threshold function $t(x)$}\label{alg:alg2}
    \KwIn{$\{(X_j, p_j, t_0(X_j))\}_{j=1}^n$}
    \KwOut{Rejection set $\mathcal{R}$}
    
    Pretrain $\hat{t}(x)$ to approximate $\{(X_j, t_0(X_j))\}_{j=1}^n$.
    
    Optimize $\hat{t}(x)$ by solving: $\mathbf{maximize}\; \hat{D}(t) \;\; \mathbf{s.t.}\; \widehat{FDP}(t) \leq \alpha$.
    
    Construct the rejection set: $\mathcal{R} \leftarrow \{j \mid Y_j > t(X_j)\}$.
    
    \KwRet{$\mathcal{R}$}
\end{algorithm}

\section{Simulations}
\label{sec:simulation}

For this section, we compare our method with other methods for controlling FDR(AdaPTglm, Adafdr, IHW, SIM). Since the existing methods are applicable to the p-values, the {\bf Algorithm} ~\ref{alg:alg1} is applied first, then we compare the performance of such methods and {\bf Algorithm} ~\ref{alg:alg2}.

%In the following simulations, we have four settings where 1) only the null proportion, 2) the alternative distribution, 3) both distributions are dependent on the covariate, 4) both having periodicity with respect to the covariate. In all of the cases, while all methods controlled the false discovery rate, our proposed method outperformed other methods in terms of the number of rejections and true positive rate. 

The number of samples derived from null and alternative distributions is, respectively, 4000 and 1000. The covariate $X_i$ for the null distribution is derived from the uniform distribution, while for the alternative distributions, it varies in the simulation settings. For the observed values($y_i$), we first sample the quantile $q_i$ using the beta distribution. For the null values, since the distribution should be symmetric, the two parameters for the beta distribution are the same, while for the alternative values, we choose the first parameter to be significantly larger than the other. Then, we take some distribution $\Phi_i$ that may or may not depend on the covariate$X_i$ and derive the observed values. Each simulation is done 500 times.

\begin{table}[htbp]
\centering
\caption{Progressive expansion of covariate influence across simulation settings. 
A checkmark (\checkmark) indicates direct dependence of that component on the covariate $X_i$.}
\label{tab:covariate_expansion}
\begin{tabular}{@{}ccccc@{}}
\toprule
\textbf{Setting}                       & \textbf{$f
_{1,x}$} & \textbf{$f_{0,x}$} & \textbf{Signal proportion $\pi(x)$} & \textbf{Periodic effect} \\ \midrule
\textbf{Section~\ref{sec:simulation1}} & \checkmark                        & $\cdot$                     & $\cdot$                              & $\cdot$                               \\
\textbf{Section~\ref{sec:simulation2}} & \checkmark                        & \checkmark                 & $\cdot$                              & $\cdot$                               \\
\textbf{Section~\ref{sec:simulation3}} & \checkmark                        & \checkmark                 & \checkmark                          & $\cdot$                               \\
\textbf{Section~\ref{sec:simulation4}} & \checkmark                        & \checkmark                 & \checkmark                          & \checkmark                           \\ \bottomrule
\end{tabular}
\end{table}

{
Building on this common framework, we progressively increase the complexity of how the covariate affects the data-generating process. 
The first setting begins with a minimal dependence where the covariate influences only the alternative distribution. 
We then extend this influence to both the null and alternative distributions, further introduce heterogeneity in the signal proportion across the covariate space, 
and finally impose nonlinear, periodic dependence patterns. 
This gradual escalation allows a systematic evaluation of whether each method can sustain FDR control and detection power under increasingly complex covariate structures.
}

\subsection{Covariate affecting only the alternative distribution}\label{sec:simulation1}
%In this setting, the alternatives are located uniformly with respect to the covariate, while only the alternative distributions depend on the covariate.
In the baseline setting, the covariate influences only the \textit{alternative} distribution, while the null distribution remains fixed. 
Each covariate $X_i$ is drawn uniformly from $[0,1]$. 
For null hypotheses, the observed values $y_i$ are generated from symmetric $\mathrm{Beta}(2,2)$ quantiles transformed through a truncated Normal distribution $\Phi_i = N(10, 10)_{[-2.5,\, 2.5]}$, producing stationary nulls that do not vary with the covariate. 
For alternative hypotheses, right-skewed $\mathrm{Beta}(10,0.5)$ quantiles are mapped through $\Phi_i = N(5e^{X_i},\, 10 - \sin(\pi X_i))_{[-2.5,\, 2.5]}$, in which the mean increases exponentially with $X_i$ and the variance oscillates mildly with $X_i$. 
This isolates cases where only the signal distribution varies across the covariate, testing whether methods can adapt to signal-specific dependence without covariate-driven shifts in the null.
%\begin{align*}
%    & \#(H_i = 0) = 4000, &&\#(H_i = 1) = 1000 \\
%    & X_i \sim U[0,1] \\
%    & q_i|H_i = 0 \sim \text{Beta}(2,2), &&q_i|H_i = 1 \sim \text{Beta}(10, 0.5) \\
%    & y_i = \Phi_i^{-1}(q_i)
%\end{align*}
%where $\Phi_i$ is the cumulative distribution function of $N(10, 10)_{[-2.5, 2.5]}$ if $H_i = 0$ and $N(5e^{X_i}, 10 - sin(\pi X_i))_{[-2.5, 2.5]}$

\begin{table}[H]
\centering
\resizebox{\textwidth}{!}{%
\begin{tabular}{cccccccccc}
\hline
 & \multicolumn{3}{c}{$\alpha = 0.05$} & \multicolumn{3}{c}{$\alpha = 0.10$} & \multicolumn{3}{c}{$\alpha = 0.20$} \\ \cline{2-10}
Method & $R$ & $\widehat{FDR}$ & $TPR$ & $R$ & $\widehat{FDR}$ & $TPR$ & $R$ & $\widehat{FDR}$ & $TPR$ \\ \hline
Proposed Method & \textbf{326 (57.6)} & \textbf{0.0369 (0.0149)} & \textbf{0.314 (0.0531)} &
\textbf{443 (70.2)} & \textbf{0.0541 (0.0176)} & \textbf{0.418 (0.0626)} &
\textbf{633 (73.3)} & \textbf{0.0918 (0.0228)} & \textbf{0.573 (0.0568)} \\
SIM              & 206 (49.4) & \textbf{0.0409 (0.0211)} & 0.198 (0.0466) &
242 (60.3) & \textbf{0.0616 (0.0258)} & 0.226 (0.0549) &
342 (88.3) & \textbf{0.113 (0.0309)} & 0.302 (0.0733) \\
AdaFDR           & 287 (56.5) & \textbf{0.0373 (0.0165)} & 0.276 (0.0522) &
378 (70.6) & \textbf{0.0595 (0.0220)} & 0.355 (0.0628) &
538 (87.7) & \textbf{0.110 (0.0295)} & 0.478 (0.0730) \\
AdaPT GLM        & 280 (54.8) & \textbf{0.0318 (0.0146)} & 0.270 (0.0509) &
381 (69.8) & \textbf{0.0521 (0.0188)} & 0.361 (0.0625) &
560 (90.4) & \textbf{0.0960 (0.0236)} & 0.505 (0.0755) \\
IHW              & 245 (57.8) & \textbf{0.0257 (0.0125)} & 0.239 (0.0548) &
361 (74.5) & \textbf{0.0385 (0.0147)} & 0.347 (0.0685) &
583 (72.5) & \textbf{0.0747 (0.0199)} & 0.539 (0.0582) \\ \hline
\end{tabular}%
}
\caption{Comparison of methods across different $\alpha$ values. Reported as mean (std). Bold indicates the highest R, highest TPR, and all $\widehat{FDR}$ values below the nominal $\alpha$.}
\label{tab:result_custom1}
\end{table}

\subsection{Covariate affecting both null and alternative distributions}\label{sec:simulation2}
In this setting, both null and alternative distributions vary with the covariate, introducing heterogeneity in the overall data scale. 
We again sample $X_i$ uniformly from $[0,1]$. 
Null observations are derived from $\mathrm{Beta}(2,2)$ quantiles transformed through $\Phi_i = N(10e^{X_i},\, 5 + \sin(\pi X_i))_{[-2.5,\, 2.5]}$, so that both the mean and variance depend smoothly on $X_i$. 
For alternatives, right-skewed $\mathrm{Beta}(10,0.5)$ quantiles are transformed through the same family $\Phi_i$, ensuring that both null and alternative distributions co-vary with the covariate. 
Because the entire data distribution shifts as a function of $X_i$, this scenario challenges methods to maintain calibration when the null itself is covariate-dependent.
%\begin{align*}
%    & \#(H_i = 0) = 4000, && \#(H_i = 1) = 1000 \\
%    & X_i \sim U[0,1] \\
%    & q_i|H_i = 0 \sim \text{Beta}(2,2), && q_i|H_i = 1 \sim \text{Beta}(10, 0.5) \\
%    & y_i = \Phi_i^{-1}(q_i)
%\end{align*}
%where $\Phi_i$ is the cumulative distribution function of $N(10e^{X_i}, 5 + sin(\pi X_i))_{[-2.5, 2.5]}$

\begin{table}[H]
\centering
\resizebox{\textwidth}{!}{%
\begin{tabular}{cccccccccc}
\hline
 & \multicolumn{3}{c}{$\alpha = 0.05$} & \multicolumn{3}{c}{$\alpha = 0.10$} & \multicolumn{3}{c}{$\alpha = 0.20$} \\ \cline{2-10}
Method & $R$ & $\widehat{FDR}$ & $TPR$ & $R$ & $\widehat{FDR}$ & $TPR$ & $R$ & $\widehat{FDR}$ & $TPR$ \\ \hline
Proposed Method & \textbf{625 (51.8)} & \textbf{0.0356 (0.0112)} & \textbf{0.603 (0.0454)} &
\textbf{752 (50.4)} & \textbf{0.0621 (0.0153)} & \textbf{0.705 (0.0386)} &
\textbf{926 (51.4)} & \textbf{0.118 (0.0212)} & \textbf{0.816 (0.0303)} \\
SIM              & 442 (61.5) & \textbf{0.0438 (0.0164)} & 0.423 (0.0561) &
519 (66.9) & \textbf{0.0793 (0.0209)} & 0.477 (0.0573) &
677 (77.7) & \textbf{0.152 (0.0280)} & 0.573 (0.0561) \\
AdaFDR           & 594 (56.0) & \textbf{0.0322 (0.0111)} & 0.574 (0.0500) &
728 (52.5) & \textbf{0.0572 (0.0144)} & 0.686 (0.0420) &
911 (53.6) & \textbf{0.113 (0.0209)} & 0.808 (0.0324) \\
AdaPT GLM        & 591 (55.8) & \textbf{0.0296 (0.0102)} & 0.573 (0.0502) &
731 (52.0) & \textbf{0.0545 (0.0143)} & 0.691 (0.0413) &
916 (52.5) & \textbf{0.109 (0.0202)} & 0.815 (0.0314) \\
IHW              & 539 (56.0) & \textbf{0.0234 (0.0089)} & 0.527 (0.0520) &
681 (51.1) & \textbf{0.0442 (0.0124)} & 0.650 (0.0428) &
864 (49.0) & \textbf{0.0922 (0.0176)} & 0.783 (0.0326) \\ \hline
\end{tabular}%
}
\caption{Comparison of methods across different $\alpha$ values. Reported as mean (std). Bold indicates the highest R, highest TPR, and all $\widehat{FDR}$ values below the nominal $\alpha$.}
\label{tab:result_custom2}
\end{table}

\subsection{Covariate affecting the mixture proportion of signals}\label{sec:simulation3}
Here, the covariate determines the local density of signals, creating heterogeneous signal proportions. 
The covariates for nulls are drawn uniformly, whereas those for alternatives follow a $\mathrm{Beta}(2,2)$ distribution that concentrates around the midpoint of the covariate range. 
Null $y_i$ values are obtained from $\mathrm{Beta}(3,3)$ quantiles transformed through $\Phi_i = N(10e^{X_i},\, 5 + \sin(\pi X_i))_{[-2.5,\, 2.5]}$, while alternative values use $\mathrm{Beta}(10,0.5)$ quantiles through the same covariate-dependent $\Phi_i$. 
Thus, both distributions evolve with $X_i$, but the proportion of signals is higher around $X_i = 0.5$. 
This design evaluates whether methods can adaptively learn thresholds that exploit uneven signal concentration while still maintaining FDR control.
%In this setting, the alternatives are dense in the center($X_i = \frac{1}{2}$), while both distributions depend on the covariate.
%\begin{align*}
%    & \#(H_i = 0) = 4000, &&\#(H_i = 1) = 1000 \\
%    & X_i|H_i = 0 \sim U[0,1], &&X_i|H_i = 1 \sim \text{Beta}[2,2] \\
%    & q_i|H_i = 0 \sim \text{Beta}(3,3), &&q_i|H_i = 1 \sim \text{Beta}(10, 0.5) \\
%    & y_i = \Phi_i^{-1}(q_i)
%\end{align*}
%where $\Phi_i$ is the cumulative distribution function of $N(10e^{X_i}, 5 + sin(\pi X_i))_{[-2.5, 2.5]}$

\begin{table}[H]
\centering
\resizebox{\textwidth}{!}{%
\begin{tabular}{cccccccccc}
\hline
 & \multicolumn{3}{c}{$\alpha = 0.05$} & \multicolumn{3}{c}{$\alpha = 0.10$} & \multicolumn{3}{c}{$\alpha = 0.20$} \\ \cline{2-10}
Method & $R$ & $\widehat{FDR}$ & $TPR$ & $R$ & $\widehat{FDR}$ & $TPR$ & $R$ & $\widehat{FDR}$ & $TPR$ \\ \hline
Proposed Method & \textbf{871 (32.1)} & \textbf{0.0438 (0.0111)} & \textbf{0.832 (0.0244)} &
\textbf{957 (31.6)} & \textbf{0.0773 (0.0157)} & \textbf{0.883 (0.0185)} &
\textbf{1.09e+03 (37.5)} & \textbf{0.148 (0.0221)} & \textbf{0.928 (0.0149)} \\
SIM              & 709 (47.7) & {0.0633 (0.0143)} & 0.664 (0.0415) &
804 (49.7) & {0.109 (0.0185)} & 0.715 (0.0379) &
1.01e+03 (56.5) & \textbf{0.198 (0.0218)} & 0.807 (0.0323) \\
AdaFDR           & 854 (34.0) & \textbf{0.0374 (0.0102)} & 0.822 (0.0267) &
947 (31.4) & \textbf{0.0684 (0.0141)} & 0.881 (0.0192) &
1.08e+03 (36.5) & \textbf{0.134 (0.0204)} & 0.935 (0.0128) \\
AdaPT GLM        & 847 (33.9) & \textbf{0.0377 (0.0101)} & 0.815 (0.0271) &
942 (32.4) & \textbf{0.0695 (0.0144)} & 0.876 (0.0198) &
1.08e+03 (37.8) & \textbf{0.137 (0.0209)} & 0.932 (0.0134) \\
IHW              & 809 (33.4) & \textbf{0.0282 (0.0083)} & 0.786 (0.0283) &
903 (30.8) & \textbf{0.0544 (0.0123)} & 0.853 (0.0216) &
1.03e+03 (33.9) & \textbf{0.112 (0.0188)} & 0.913 (0.0150) \\ \hline
\end{tabular}%
}
\caption{Comparison of methods across different $\alpha$ values. Reported as mean (std). Bold indicates the highest R, highest TPR, and all $\widehat{FDR}$ values below the nominal $\alpha$.}
\label{tab:result_custom3}
\end{table}

\subsection{Nonlinear and periodic covariate dependence}\label{sec:simulation4}
Finally, the most complex scenario introduces nonlinear and periodic effects of the covariate on both distributions and signal proportion. 
Null covariates are sampled uniformly on $[0,1]$, while alternative covariates follow a $\mathrm{Beta}(0.5,0.5)$ distribution, yielding higher signal density near the boundaries. 
For nulls, symmetric $\mathrm{Beta}(2,2)$ quantiles are transformed through $\Phi_i = N(\sin(4X_i) + \sin(8X_i),\, 5)_{[-2.5,\, 2.5]}$, and for alternatives, right-skewed $\mathrm{Beta}(10,0.5)$ quantiles are passed through the same periodic $\Phi_i$. 
Because the mean of $\Phi_i$ oscillates multiple times within the covariate range, this scenario tests whether methods can track highly nonlinear, non-monotonic dependencies in both signal distribution and signal prevalence.

%In this setting, the alternatives are dense{\color{Green}( sparse??)} in the center($X_i = \frac{1}{2}$), while both distributions show periodicity with respect to the covariate.
%\begin{align*}
%    & \#(H_i = 0) = 4000, &&\#(H_i = 1) = 1000 \\
%    & X_i|H_i = 0 \sim U[0,1], && X_i|H_i = 1 \sim \text{Beta}(0.5, 0.5) \\
%    & q_i|H_i = 0 \sim \text{Beta}(2,2), && q_i|H_i = 1 \sim \text{Beta}(10, 0.5) \\
%    & y_i = \Phi_i^{-1}(q_i)
%\end{align*}
%where $\Phi_i$ is the cumulative distribution function of $N(sin(4X_i) + sin(8X_i), 5)_{[-2.5, 2.5]}$

\begin{table}[H]
\centering
\resizebox{\textwidth}{!}{%
\begin{tabular}{cccccccccc}
\hline
 & \multicolumn{3}{c}{$\alpha = 0.05$} & \multicolumn{3}{c}{$\alpha = 0.10$} & \multicolumn{3}{c}{$\alpha = 0.20$} \\ \cline{2-10}
Method & $R$ & $\widehat{FDR}$ & $TPR$ & $R$ & $\widehat{FDR}$ & $TPR$ & $R$ & $\widehat{FDR}$ & $TPR$ \\ \hline
Proposed Method & \textbf{524 (68.7)} & \textbf{0.0373 (0.0128)} & \textbf{0.504 (0.0636)} &
\textbf{643 (72.7)} & \textbf{0.0623 (0.0163)} & \textbf{0.603 (0.0637)} &
\textbf{817 (76.8)} & \textbf{0.116 (0.0241)} & \textbf{0.721 (0.0583)} \\
SIM              & 415 (65.0) & \textbf{0.0202 (0.0106)} & 0.407 (0.0615) &
483 (68.3) & \textbf{0.0371 (0.0168)} & 0.464 (0.0611) &
599 (78.5) & \textbf{0.0896 (0.0287)} & 0.544 (0.0601) \\
AdaFDR           & 494 (68.3) & \textbf{0.0356 (0.0131)} & 0.476 (0.0635) &
608 (73.8) & \textbf{0.0602 (0.0162)} & 0.571 (0.0659) &
780 (85.4) & \textbf{0.115 (0.0233)} & 0.689 (0.0688) \\
AdaPT GLM        & 483 (69.1) & \textbf{0.0327 (0.0126)} & 0.467 (0.0641) &
607 (72.9) & \textbf{0.0572 (0.0160)} & 0.571 (0.0640) &
789 (80.9) & \textbf{0.112 (0.0224)} & 0.700 (0.0622) \\
IHW              & 451 (69.3) & \textbf{0.0269 (0.0111)} & 0.438 (0.0652) &
574 (70.4) & \textbf{0.0479 (0.0143)} & 0.546 (0.0633) &
753 (77.1) & \textbf{0.0967 (0.0200)} & 0.679 (0.0619) \\ \hline
\end{tabular}%
}
\caption{Comparison of methods across different $\alpha$ values. Reported as mean (std). Bold indicates the highest R, highest TPR, and all $\widehat{FDR}$ values below the nominal $\alpha$.}
\label{tab:result_custom4}
\end{table}

\section{Real Data}
\label{sec:realdata}
\subsection{ Application to Blood Pressure Data}
\label{subsec:SBP}

This study utilized publicly available data from the National Health and Nutrition Examination Survey (NHANES) 2021--2023 cohort~\citep{AgeBP}. Our analysis focuses on the relationship between systolic blood pressure (SBP) and age. A key assumption of our proposed methodology is that the conditional null distribution of SBP, given a covariate like age, is symmetric.

To empirically validate this assumption, we first visualized the overall relationship between age and SBP using a scatter plot (Figure~\ref{fig:bp_a}). By plotting the SBP density across different age categories, we created the raincloud plot shown in Figure~\ref{fig:bp_b}. As the figure demonstrates, the SBP distributions for each age group exhibit approximate symmetry, albeit around a center that shifts with age. This visual evidence provides strong justification for the application of our symmetry-based method.

\begin{figure}[h!]
    \centering

    % --- 첫 번째 행 ---
    \begin{subfigure}[b]{0.48\textwidth}
        \centering
        \includegraphics[width=\textwidth]{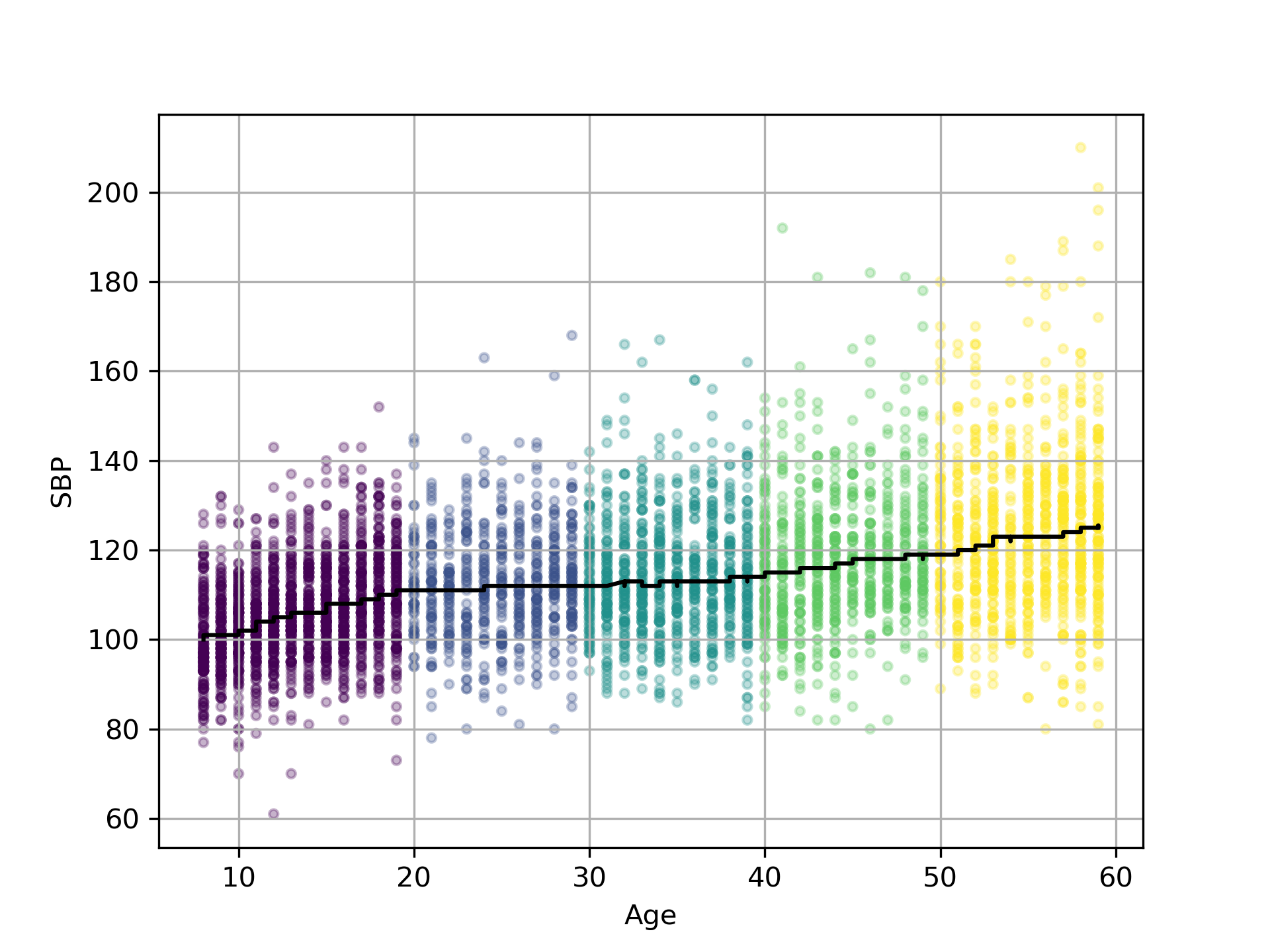}
        \caption{Scatter plot of age versus SBP.}
        \label{fig:bp_a}
    \end{subfigure}
    \hfill
    \begin{subfigure}[b]{0.48\textwidth}
        \centering
        \includegraphics[width=\textwidth]{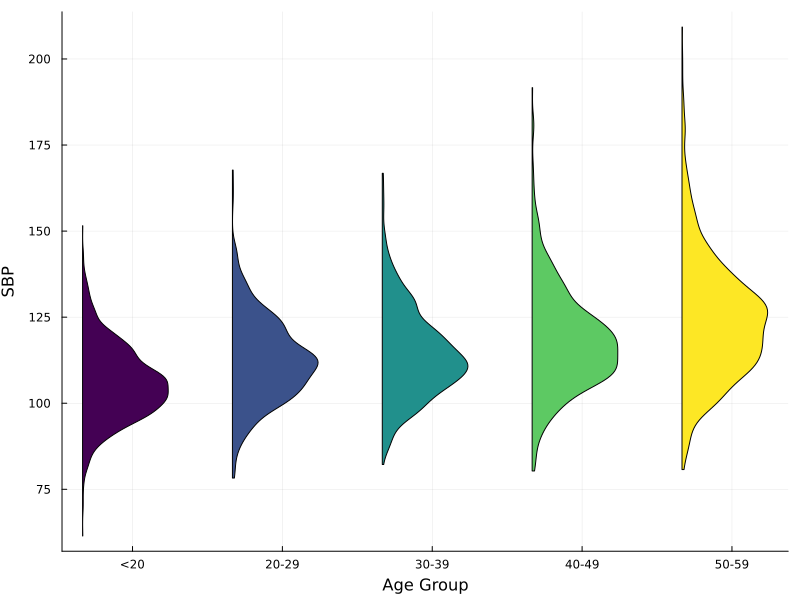}
        \caption{Raincloud plot of SBP distributions.}
        \label{fig:bp_b}
    \end{subfigure}
    
    \caption[Validation of Symmetry Assumption]{
        Validation of the conditional symmetry assumption using NHANES SBP data. 
    }
    \label{fig:bp_validation} 
\end{figure}

\begin{comment}
    Figure 5, Figure 6 : no longer needed? 
\end{comment}

\begin{figure}[htbp]
    \centering
    \begin{minipage}[b]{0.45\linewidth}
        \centering
        \includegraphics[width=\linewidth]{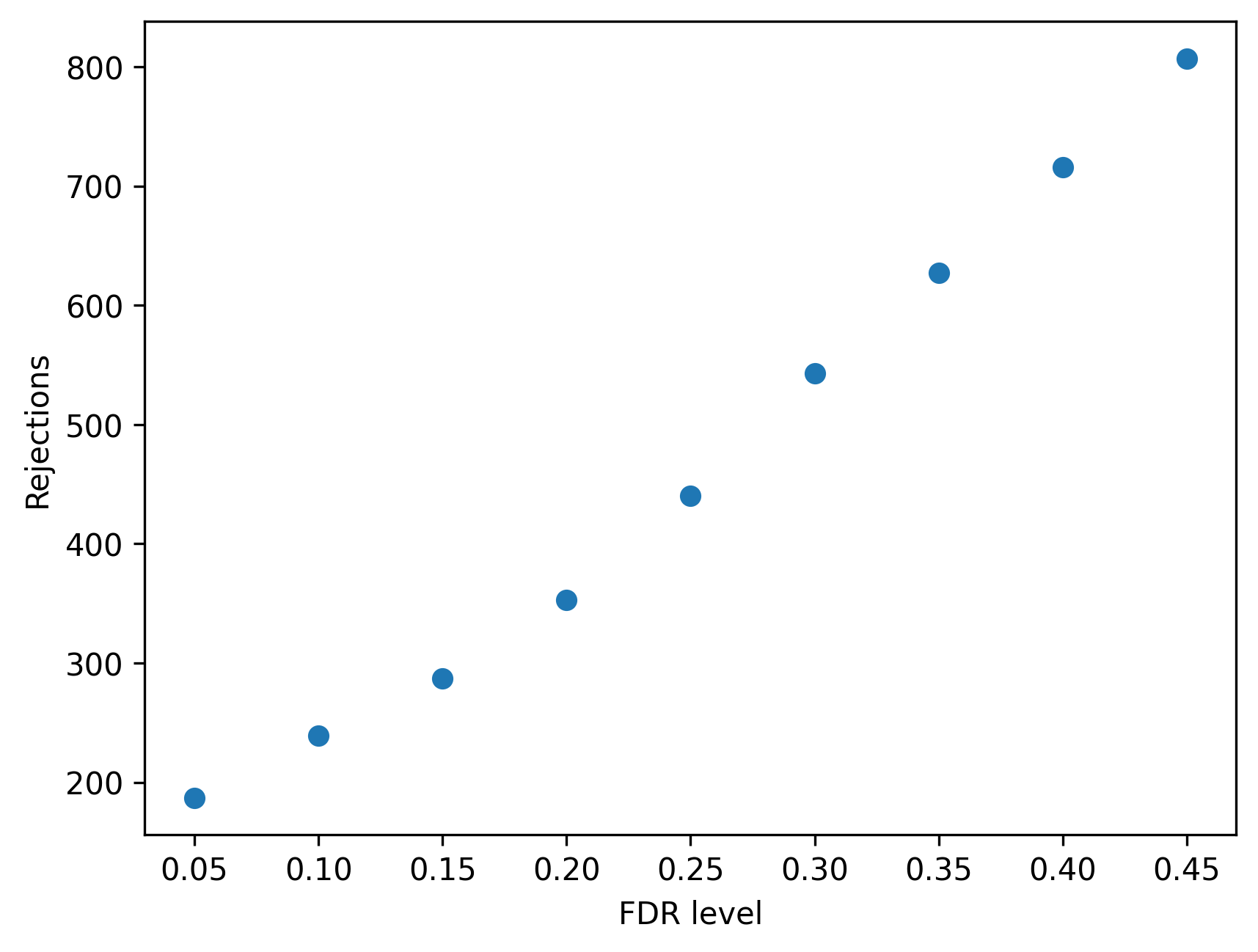}
        \caption{Rejections for each FDR level}
        \label{fig:SBP}
    \end{minipage}
    \hfill
    \begin{minipage}[b]{0.45\linewidth}
        \centering
        \includegraphics[width=\linewidth]{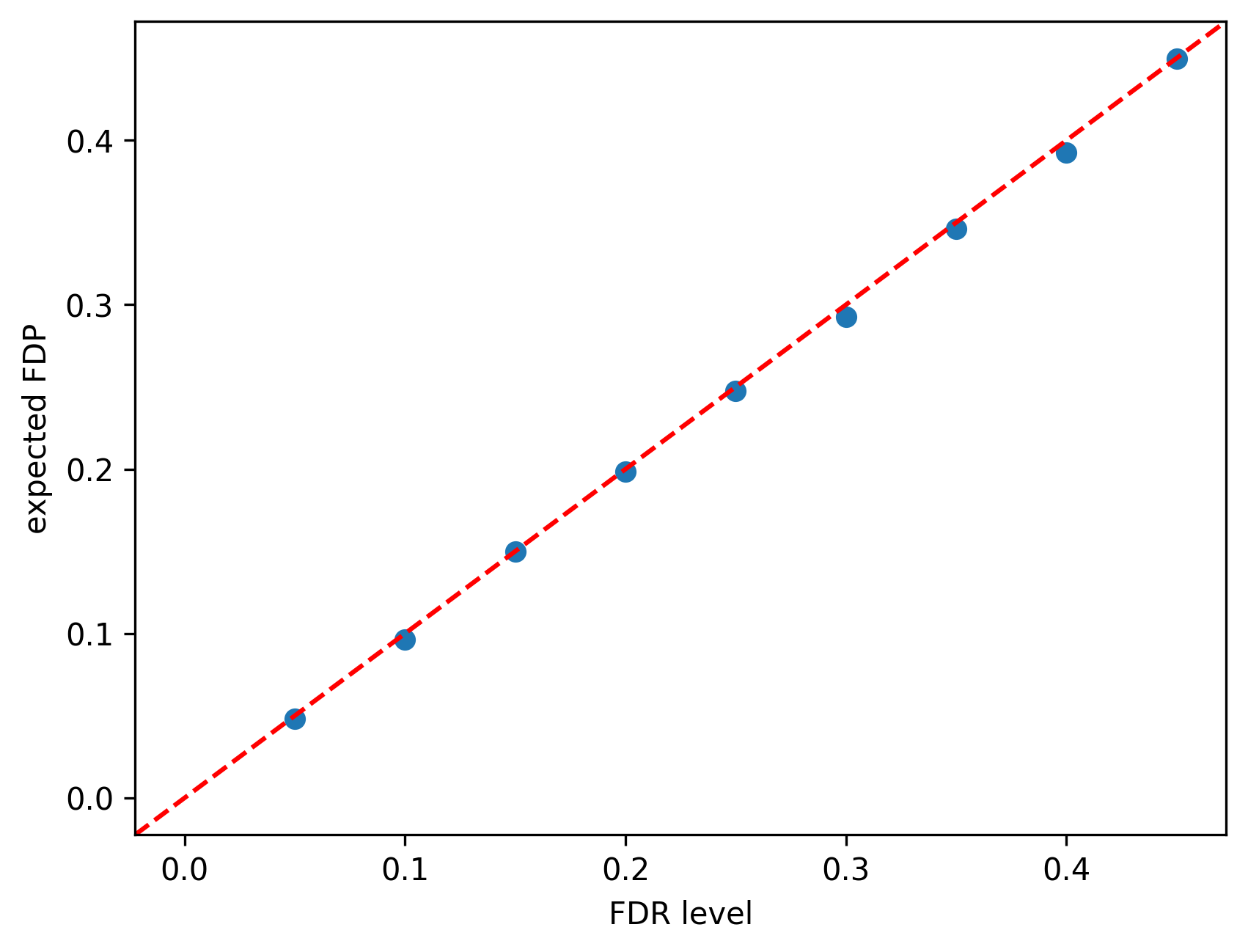}
        \caption{$\hat{\text{FDP}}$ for each FDR level}
        \label{fig:SBP2}
    \end{minipage}
\end{figure}

{

\begin{table}[htbp]
\caption{Number of Hypertensive Patients by Age Group and Diagnostic Method (Age $< 60$)}
\label{tab:patient_counts}
\begin{tabular}{@{}cccccccc@{}}
\toprule
\multicolumn{2}{c}{Age Group}                             & $<$20 & 20-29 & 30-39 & 40-49 & 50-59 & Total \\ \midrule
\multicolumn{2}{c}{Conventional Criterion (SBP $\ge$140mmHg)} & 5   & 12    & 30    & 60    & 142   & 249   \\
\multirow{3}{*}{Proposed Method}     & $\alpha = 0.05$    & 25  & 6     & 32    & 25    & 39    & 127   \\
                                     & $\alpha = 0.10$    & 43  & 14    & 41    & 29    & 39    & 166   \\
                                     & $\alpha = 0.20$    & 71  & 33    & 51    & 29    & 40    & 229   \\ \bottomrule
\end{tabular}
\end{table}

Before comparing our results with those obtained under the conventional threshold, it is important to emphasize that the traditional approach does not account for false discovery rate (FDR) control. Therefore, the outcomes of our method should be interpreted within its explicitly FDR-adjusted framework.

Across all significance levels, the proposed diagnostic procedure exhibited distinct rejection patterns compared with the conventional SBP $\ge 140$ mmHg rule. In the age-stratified analysis, our method consistently identified a higher proportion of hypertensive cases among younger individuals ($<$20 and 20–29) than the conventional criterion. Notably, in the youngest group ($<$20), the number of detections increased sharply from 25 at $\alpha=0.05$ to 43 at $\alpha=0.10$ and 71 at $\alpha=0.20$, suggesting that many individuals in this cohort have borderline SBP values close to the rejection threshold.

The increase in diagnoses among younger participants and the corresponding decrease among older cohorts align with previous findings that challenge the validity of a single SBP cutoff across all ages. Elevated SBP levels in youth—even those below 140 mmHg—have been associated with long-term cardiovascular risk \citep{yano2018hypertension, pletcher2008prehypertension, chen2008tracking}, whereas overly aggressive blood pressure reduction in older adults may lead to adverse effects \citep{min2015aggressive, mallery2025promoting, arguedas2013aggressive}. These results highlight the importance of age-adaptive diagnostic criteria, which our method seeks to formalize for individuals under 60 years of age.

%We aim to provide a threshold for diagnosing hypertension that is dependent on age, since people's blood pressure naturally increases as they age. For people relatively younger, our method proposes a conservative result than the standard criterion, which is systolic blood pressure being larger than 140, while for the elder, the standard criterion performed to be more conservative. As we can see in plot ... we may assume that our assumptions, which are zero assumption and conditional symmetry, seems to be violated in the older class.

\subsection{Application to US Air Pollution Data}

To demonstrate the practical utility and flexibility of our proposed method, we applied it to daily PM2.5 measurements from the U.S. Environmental Protection Agency (EPA) AirData database, which compiles nationwide observations from the Air Quality System (AQS) network \citep{EPA_AirData}. PM2.5, or fine particulate matter with diameters of 2.5~$\mu$m or smaller, poses serious risks to respiratory and cardiovascular health \citep{xing2016impact}. Our goal was to identify anomalous pollution events under two different covariate settings: (\textit{i}) time (month) and (\textit{ii}) spatial location (longitude and latitude).

\paragraph{(i) Temporal covariate (month).}
The first analysis used time as a one-dimensional covariate to model the gradual seasonal variation in the null center $m(t)$. 
Raw PM2.5 data exhibit significant right-skewness, violating the conditional symmetry assumption. 
To mitigate this, we applied a log-transformation to the concentration values, which successfully symmetrized the distribution, as illustrated in Figure~\ref{fig:pm25_distribution}. 
This transformation enables direct application of our procedure for estimating $m(t)$ and identifying anomalous deviations from it.

\begin{figure}[htbp]
    \centering
    \begin{subfigure}[b]{0.48\textwidth}
        \centering
        \includegraphics[width=\textwidth]{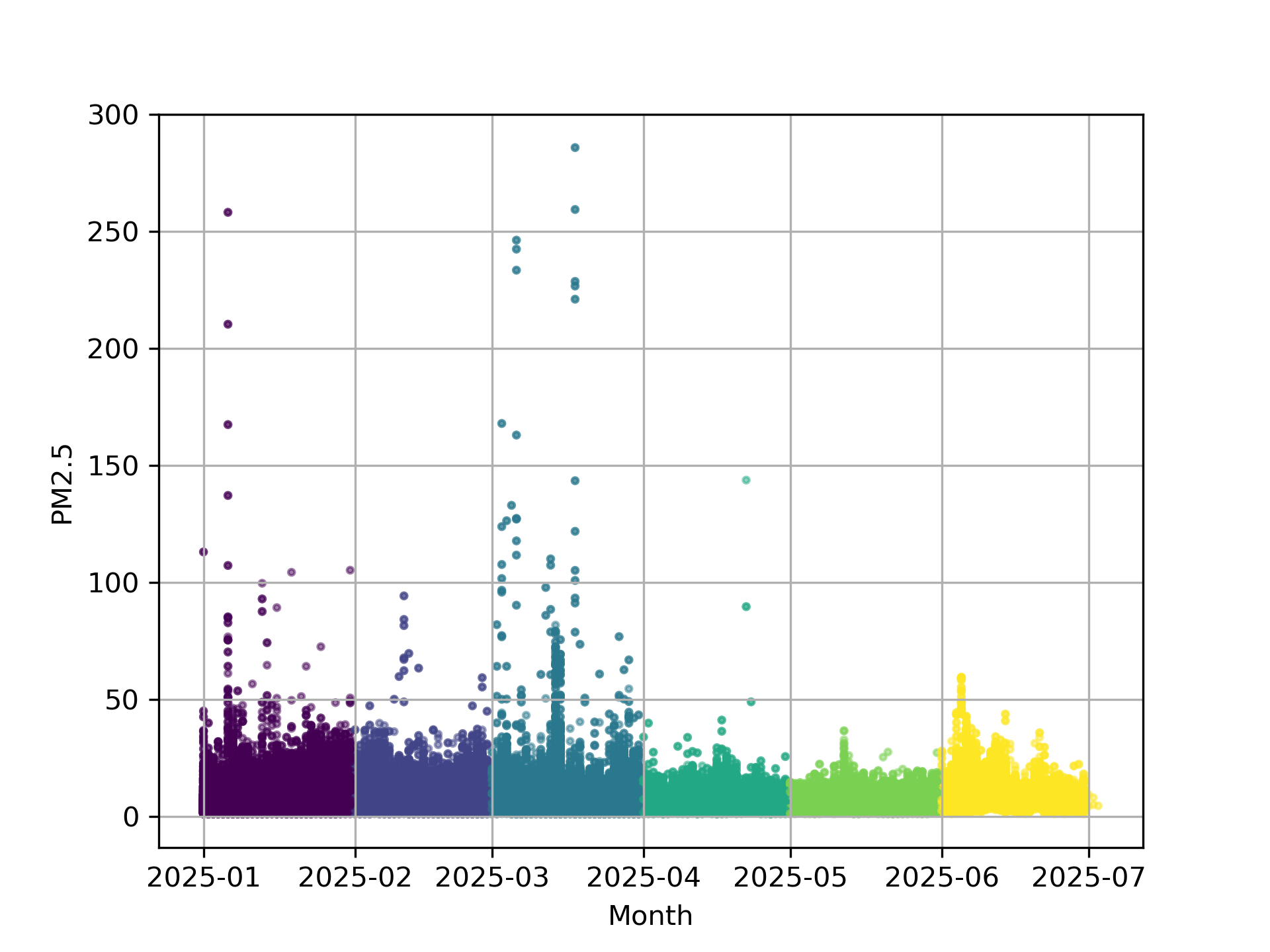}
        \caption*{Original Scatter Plot}
    \end{subfigure}
    \hfill
    \begin{subfigure}[b]{0.48\textwidth}
        \centering
        \includegraphics[width=\textwidth]{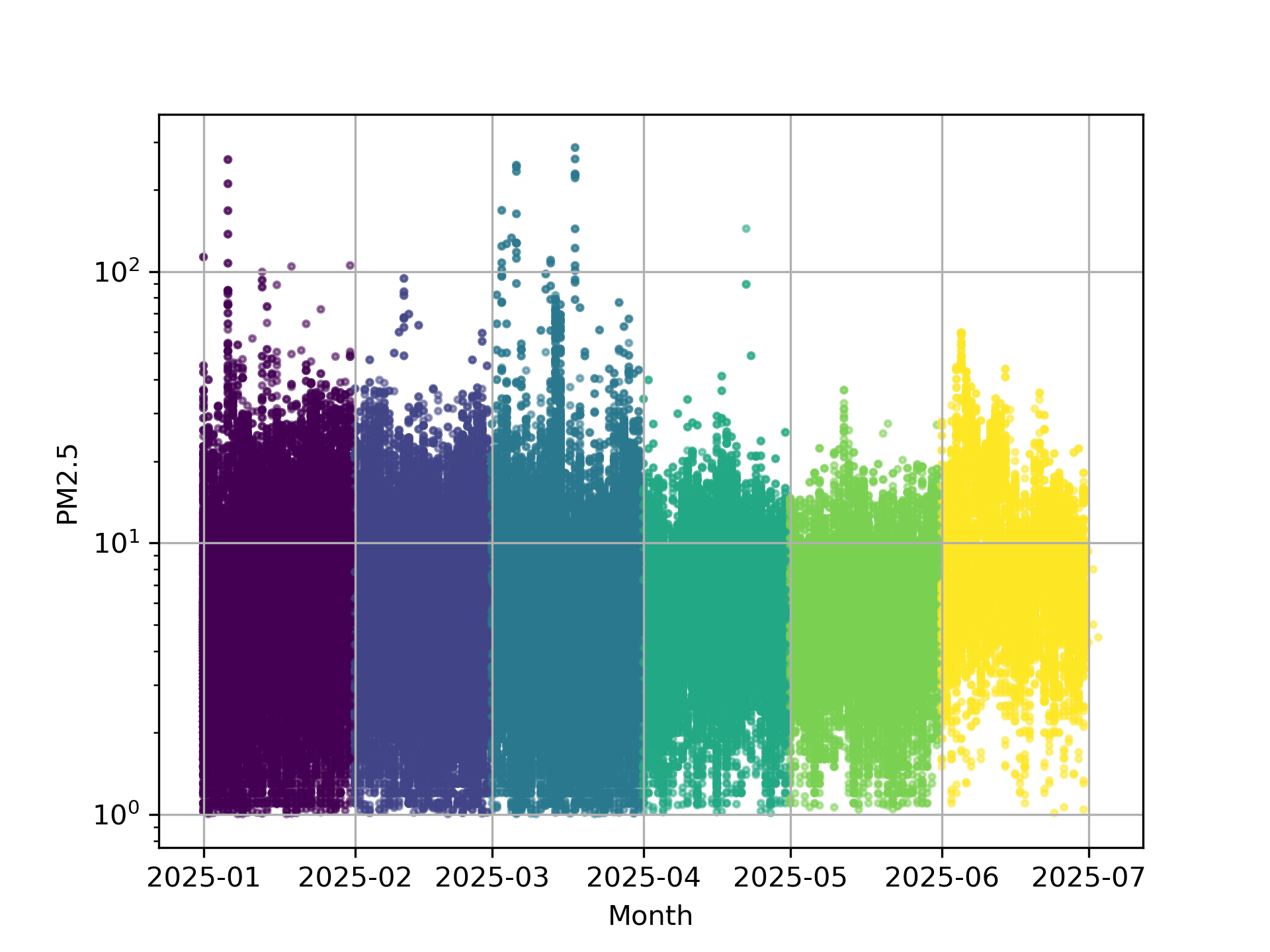}
        \caption*{Log-transformed Scatter Plot}
    \end{subfigure}

    \vspace{0.3cm}

    \begin{subfigure}[b]{0.48\textwidth}
        \centering
        \includegraphics[width=\textwidth]{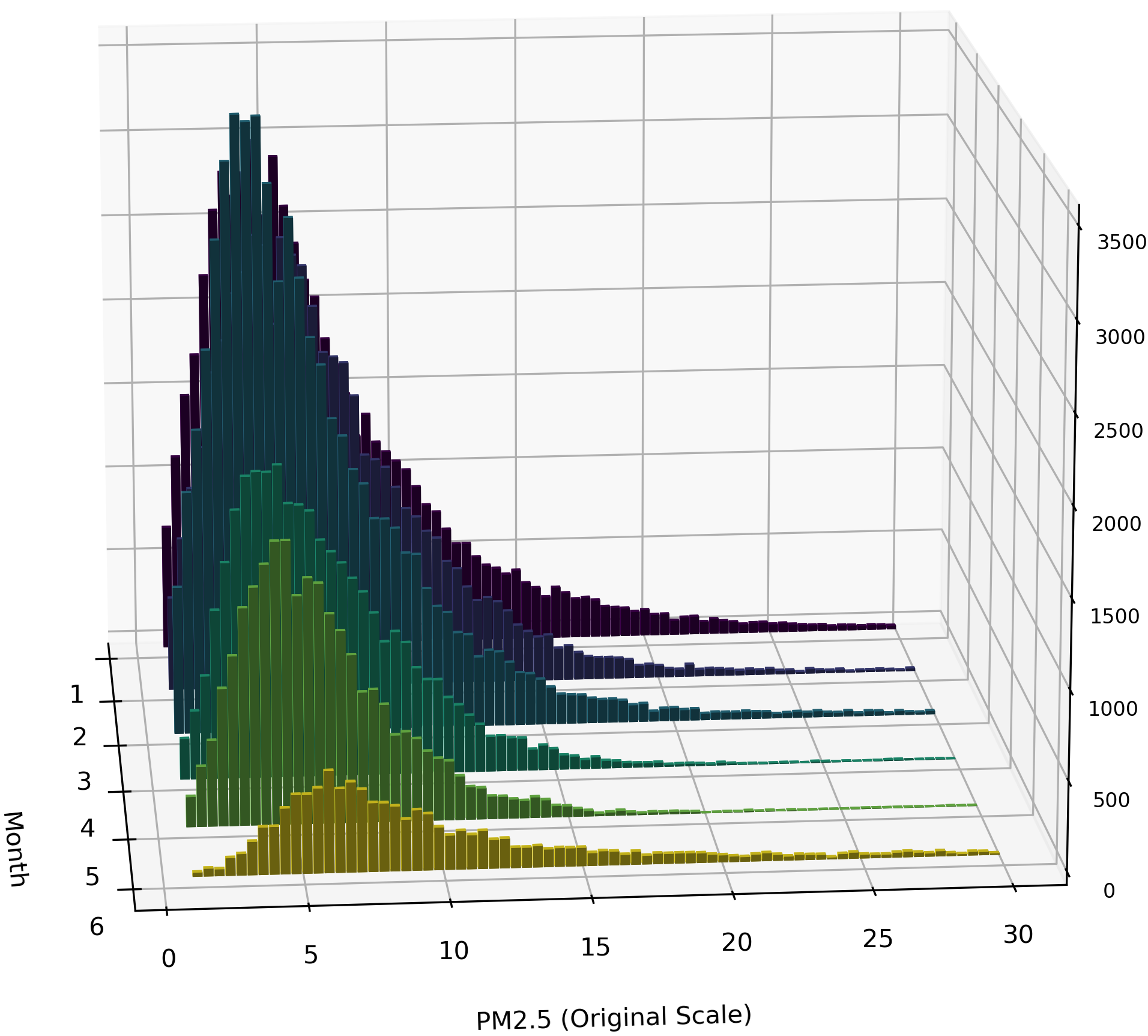}
        \caption*{Original Distribution}
    \end{subfigure}
    \hfill
    \begin{subfigure}[b]{0.48\textwidth}
        \centering
        \includegraphics[width=\textwidth]{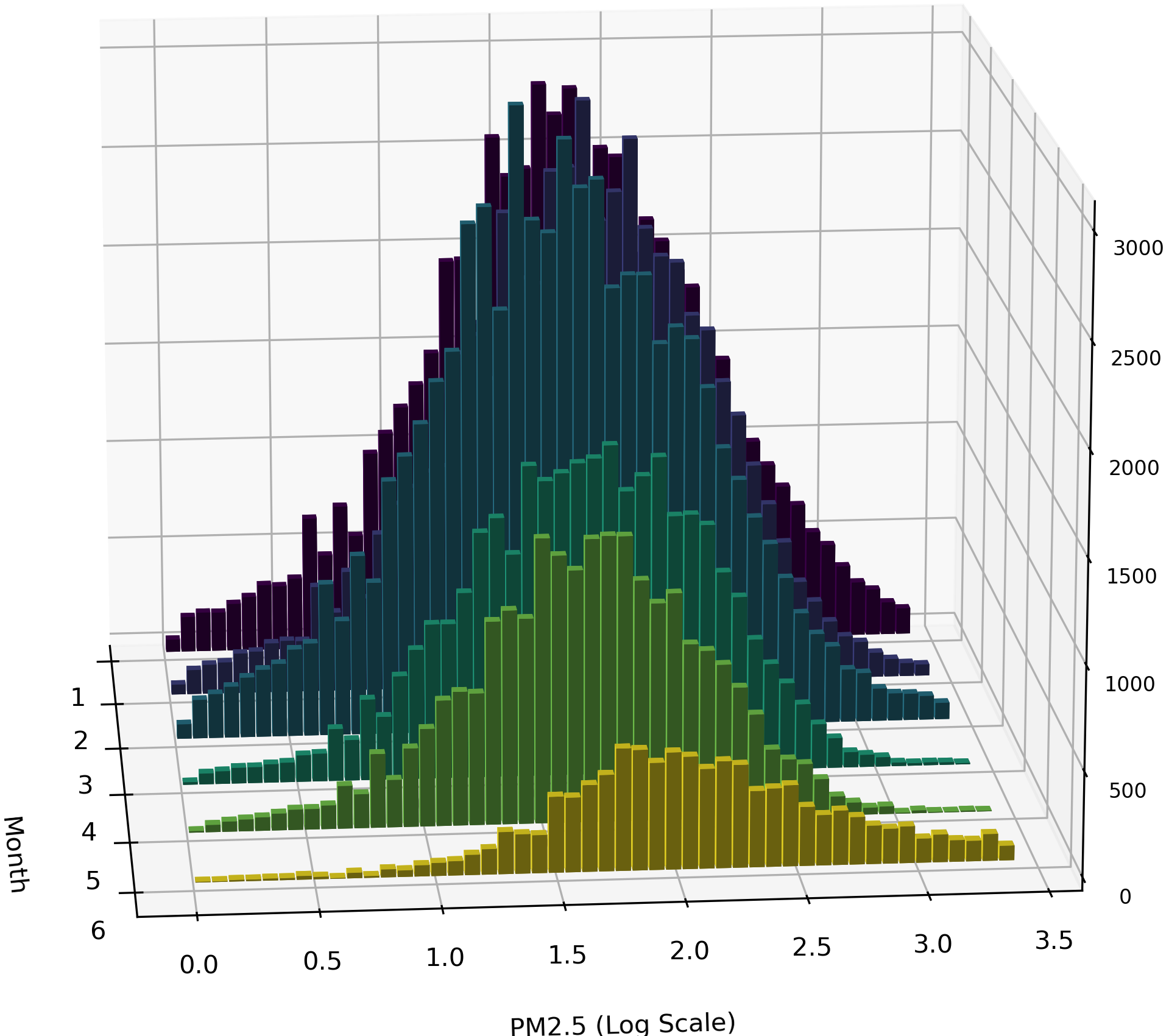}
        \caption*{Log-transformed Distribution}
    \end{subfigure}
    
    \caption{Comparison of original and log-transformed PM2.5 data distributions. 
    The transformation symmetrizes the data, making it suitable for our proposed method.}
    \label{fig:pm25_distribution}
\end{figure}

In this temporal setting, the null hypothesis $H_0$ corresponds to ordinary daily fluctuations in PM2.5, whereas the alternative $H_1$ represents transient spikes potentially caused by events such as wildfires or industrial emissions. 
Figure~\ref{fig:air_pollution_time} shows the monthly rejection results for the first half of 2025, where the color of each point denotes the month of detection. 
The detected anomalies closely align with independently recorded pollution events, including winter wildfires in California, prescribed burns in Florida, and Canadian wildfire smoke in June. 
These findings confirm that our method effectively captures temporally localized events while maintaining FDR control.

\begin{figure}[h!]
    \centering
    \includegraphics[width=\textwidth]{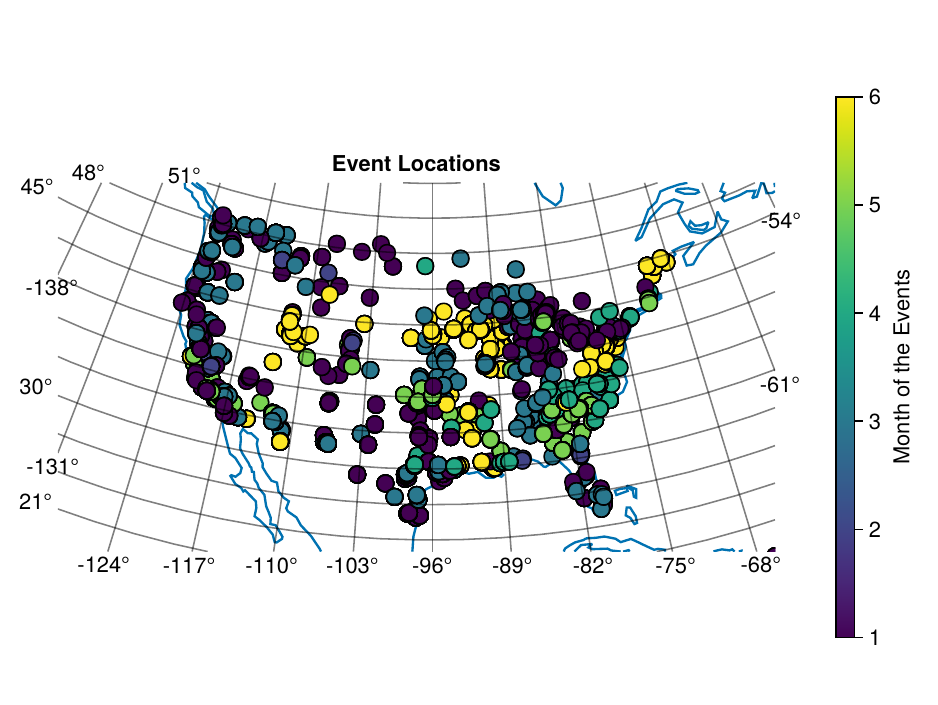}
    \caption{Detected anomalous PM2.5 events across the U.S.\ (January--June 2025), using time as the covariate. 
    Each point is colored by month, revealing seasonal and regional event patterns.}
    \label{fig:air_pollution_time}
\end{figure}

\paragraph{(ii) Spatial covariate (longitude and latitude).}
Next, we extended the analysis by considering two-dimensional spatial covariates, $(\text{longitude},\text{latitude})$. 
For each fixed month, we estimated the conditional null center $m(\text{longitude},\text{latitude})$ and identified locations whose daily PM2.5 concentrations significantly exceeded the estimated local baseline under FDR control.
Figure~\ref{fig:air_pollution_space} visualizes the spatial rejection patterns across six months (January--June). 
Gray dots represent ordinary observations, while colored markers indicate rejections obtained from our procedure. 
The clustering of rejections in the Western and Southeastern regions coincides with independently documented pollution events during the first half of 2025. 
Notably, the three major spatio-temporal patterns identified in the temporal analysis (Figure~\ref{fig:pm25_distribution})---winter wildfires in California, prescribed burns in Florida, and transboundary smoke from Canadian wildfires---are consistently reproduced in this spatial analysis. 
This concordance indicates that our method effectively captures spatially localized and seasonally varying anomalies in air pollution, even under a flexible nonparametric framework.

\begin{figure}[htbp]
    \centering
    % --- 1행 ---
    \begin{subfigure}[b]{0.32\textwidth}
        \centering
        \includegraphics[width=\textwidth]{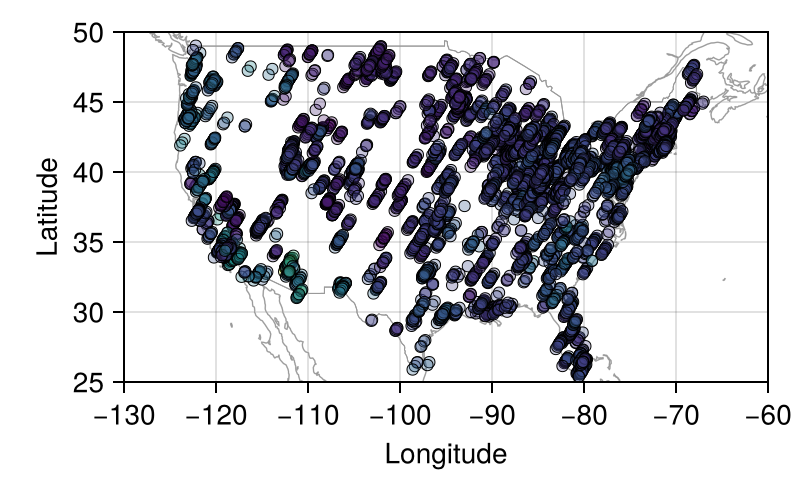}
        \caption*{January}
    \end{subfigure}
    \hfill
    \begin{subfigure}[b]{0.32\textwidth}
        \centering
        \includegraphics[width=\textwidth]{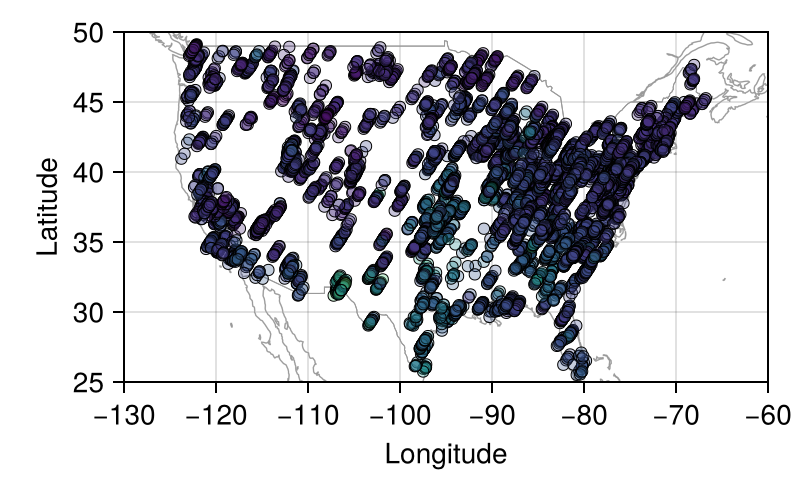}
        \caption*{March}
    \end{subfigure}
    \hfill
    \begin{subfigure}[b]{0.32\textwidth}
        \centering
        \includegraphics[width=\textwidth]{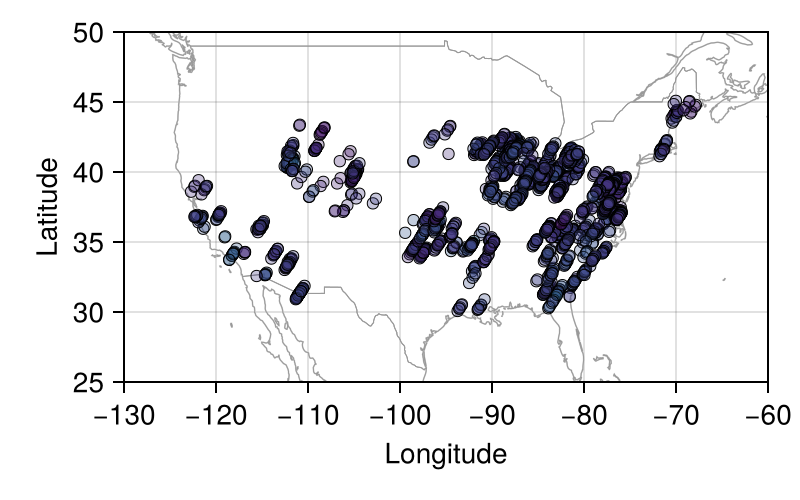}
        \caption*{May}
    \end{subfigure}

    \vspace{0.3cm} % 행 간 간격

    % --- 2행 ---
    \begin{subfigure}[b]{0.32\textwidth}
        \centering
        \includegraphics[width=\textwidth]{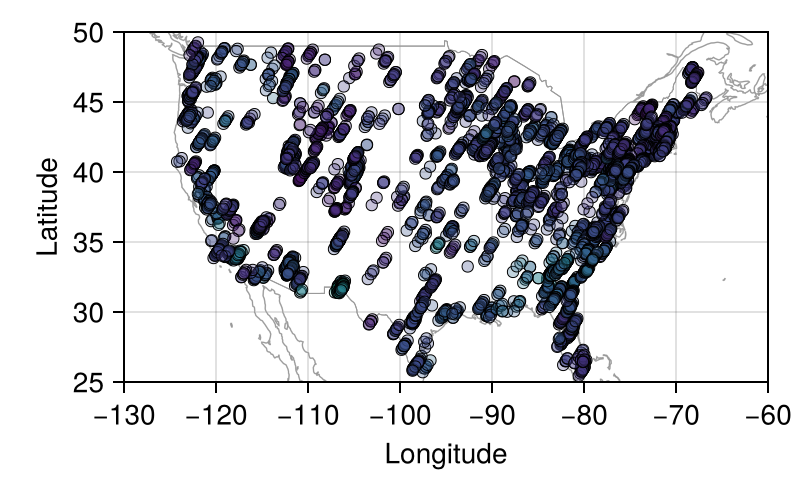}
        \caption*{February}
    \end{subfigure}
    \hfill
    \begin{subfigure}[b]{0.32\textwidth}
        \centering
        \includegraphics[width=\textwidth]{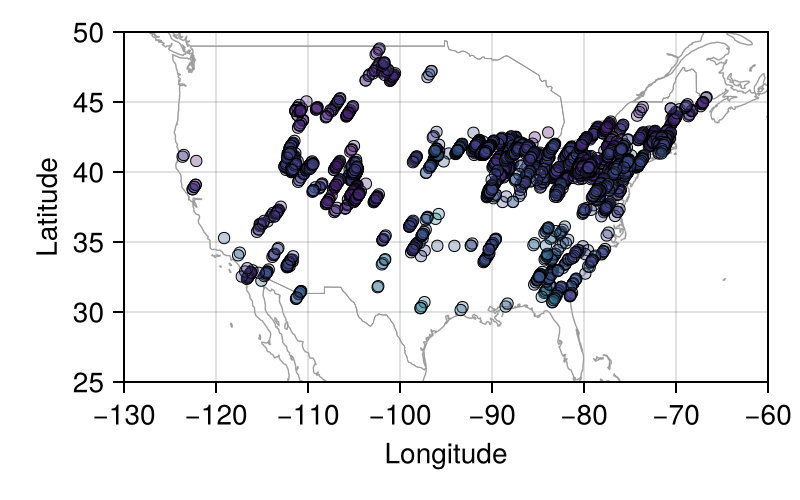}
        \caption*{April}
    \end{subfigure}
    \hfill
    \begin{subfigure}[b]{0.32\textwidth}
        \centering
        \includegraphics[width=\textwidth]{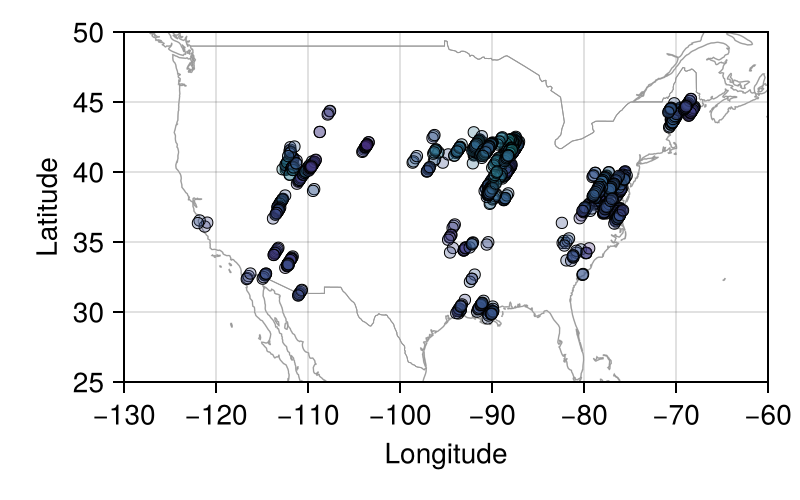}
        \caption*{June}
    \end{subfigure}

    % --- 열별 캡션 (1–2, 3–4, 5–6월) ---
    \vspace{0.5cm}

    \begin{subfigure}[b]{0.32\textwidth}
        \centering
        %\caption*{\textbf{Jan–Feb}}
    \end{subfigure}
    \hfill
    \begin{subfigure}[b]{0.32\textwidth}
        \centering
        %\caption*{\textbf{Mar–Apr}}
    \end{subfigure}
    \hfill
    \begin{subfigure}[b]{0.32\textwidth}
        \centering
        %\caption*{\textbf{May–Jun}}
    \end{subfigure}

    \caption{
    Spatial distribution of rejected PM$_{2.5}$ measurements across the continental U.S.\ from January to June 2025. 
    Each panel shows locations of rejections overlaid on observed monitoring sites (gray). 
    The panels in each column correspond to consecutive months (Jan–Feb, Mar–Apr, May–Jun), illustrating seasonal progression in detected pollution events.
    }
    \label{fig:air_pollution_space}
\end{figure}

\paragraph{Interpretation and implications.}
Combining temporal and spatial analyses demonstrates that our procedure can adapt flexibly to different covariate structures.
Even when the underlying distribution of $Y$ is asymmetric, an appropriate transformation (e.g., log) restores conditional symmetry, allowing the estimation of $m(x)$ and valid FDR control.
Moreover, results from the two-dimensional spatial covariate setting show that our method naturally scales beyond one-dimensional covariates.
When sufficient local data are available, it reliably identifies region-specific pollution events without any explicit parametric modeling of spatial dependence.

In summary, these results highlight two key advantages of our framework:
\begin{enumerate}
\item[(1)] robust applicability to non-symmetric data after transformation, and 
\item[(2)] capacity to incorporate multiple covariates in a data-adaptive manner. 
\end{enumerate}
Thus, our method provides a general-purpose tool for discovering structured anomalies while maintaining rigorous false discovery control.

\section{Concluding Remarks} 
\label{sec:conclusion}

In this paper, we proposed a novel method for multiple hypothesis testing that directly operates at the level of raw data, rather than relying on precomputed or covariate-adjusted p-values. Our method is built upon two key assumptions: the conditional symmetry assumption, which assumes that the null distribution is symmetric around a covariate-dependent center, and the zero assumption, which ensures identifiability between null and alternative components.
Under these assumptions, we developed a data-trimming algorithm to estimate the conditional null center  in a nonparametric and covariate-adaptive manner. Using this estimated center, we derived empirical $p$-values from the observed data and subsequently optimized a covariate-dependent rejection threshold via a neural network guided by an augmented Lagrangian optimization scheme. This combination allows our procedure to maximize the number of discoveries while rigorously maintaining false discovery rate (FDR) control.

A main contribution of our work is that it bridges the gap between raw-data modeling and p-value–based inference: by estimating the center of the conditional null distribution, our approach enables the generation of valid, covariate-adjusted p-values from first principles. This makes existing FDR-controlling methods that assume given p-values applicable within our framework. In this sense, our method not only generalizes the existing approaches but also provides a practical foundation for applying them when valid covariate-adjusted p-values are otherwise unavailable.

Extensive simulation studies demonstrated that the proposed method achieves superior power while maintaining nominal FDR levels across various covariate structures. Real-data analyses on age-specific blood pressure and U.S. air pollution datasets further confirmed its robustness and interpretability, showing that it can flexibly handle covariate-dependent nulls and data transformations to meet the symmetry assumption.

Overall, our proposed method offers a unified, data-driven approach to FDR control that is broadly applicable to real-world problems where the null distribution depends on covariates and parametric assumptions are difficult to justify. Future research may extend this framework to high-dimensional covariates, use kernel estimators in the trimming procedure, explore theoretical guarantees under weaker assumptions, and {investigate its integration with modern deep-learning architectures for large-scale scientific discovery }.

\begin{comment}
In this paper, we have proposed a method for using covariates in multiple hypothesis testing problems, where covariates may influence both the distributions and the null proportion. Under appropriate conditions, which are conditional symmetry and zero assumptions, we trim the data to derive a conditional mean and then convert the observed values to p-values. Then, we used neural networks to solve the optimization problem, maximizing the rejections while controlling the false discovery rate to a certain level using the augmented Lagrangian optimization scheme. The simulation results show that our algorithm outperformed the results of the existing methods applied to the covariate-adaptive p-values. Applications to a hypertension dataset have shown an appropriate example of where our proposed method can be used, proposing a/n criterion for diagnosing hypertension for young people, while application to the air pollution dataset shows that the common preprocessing steps to match the assumptions are applicable, thus widening the range of data that such methods could be applied.
\end{comment}

%\section*{Data and Code Availability}
%All data and code used in this paper can be found in our GitHub repository: 

%\href{URL}{https://github.com/thkim-lucky17/csfdr}. 

%The repository includes the scripts and instructions required to replicate our computational findings.

\section*{Acknowledgement}
This work was supported by the National Research Foundation of Korea (NRF) grant
funded by the Korea government (MSIT) (RS-2025-00556575).

\appendix

\section{Theoretical Analysis}
\label{sec:theory}
In this section, we justify the conditions required for the {\bf Algorithm} \ref{alg:alg1} to properly converge to the mean 
under the null hypothesis.

From ~\citet{miao2006sym}, we know that the test statistic $T$ follows an asymptotic normality under the following conditions in {\bf Theorem} \ref{thm:cond}. Distributions such as  $t$-distribution with degrees of freedom 3, 
Gaussian, logisitic, and double exponential distribution have been verified to satisfy such conditions.

\begin{theorem}[{\cite[Corollary~1]{miao2006sym}}]\label{thm:cond}
Let $F$ be a symmetric distribution satisfying the following regularity conditions:
\begin{enumerate}[(C1)]
    \item The quantile function $F^{-1}$ is continuous on $(0,1)$ and locally Lipschitz on every subinterval bounded away from $\{0,1\}$; any discontinuities occur on a bounded measurable set of Lebesgue measure zero.
    \item $f(\nu) \neq 0$, and the Riemann integral $\int_0^1 (F^{-1})'(u)\,[u(1-u)]^{1/2}\,du$ converges absolutely.
    \item For $i=1,2$, there exist $\delta_i \in (0,1)$ such that for all $k_i>0$ there exists $M_i<\infty$ satisfying:
    \begin{align}
        &\text{if } 0<u_1,u_2<\delta_1 \text{ and } k_1^{-1}<\tfrac{u_1}{u_2}<k_1, 
        &&M_1^{-1}<\tfrac{(F^{-1})'(u_1)}{(F^{-1})'(u_2)}<M_1, \\
        &\text{if } 1-\delta_2<u_1,u_2<1 \text{ and } k_2^{-1}<\tfrac{1-u_1}{1-u_2}<k_2, 
        &&M_2^{-1}<\tfrac{(F^{-1})'(u_1)}{(F^{-1})'(u_2)}<M_2.
    \end{align}
\end{enumerate}
Then,
\[
\sqrt{n}\left(T-\frac{\lambda}{\sqrt{\pi/2}\,\tau}\right)
\xrightarrow{d} N(0, \sigma_T^2).
\]
\end{theorem}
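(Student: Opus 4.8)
The plan is to express $T$ as a smooth functional of the empirical distribution $F_n$ and then invoke the functional delta method, so that the asymptotics of $T=(\hat\mu-\hat\nu)/\hat d$ follow from the joint influence-function representation of its three ingredients. Writing $\mu=\e[Y]$, $\nu=F^{-1}(1/2)$, $\gamma=\e[Y\,I(Y<\nu)]$ and $\tau=\e|Y-\nu|$, I would first record the first-order (Bahadur-type) linearizations of each ingredient. The sample mean admits the trivial representation $\sqrt n(\hat\mu-\mu)=n^{-1/2}\sum_i (Y_i-\mu)+o_p(1)$, whose validity (finite variance and applicability of the CLT, viewing $\hat\mu$ as an equally weighted $L$-statistic) is exactly what condition (C2) secures through absolute convergence of $\int_0^1 (F^{-1})'(u)[u(1-u)]^{1/2}\,du$. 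The sample median satisfies $\sqrt n(\hat\nu-\nu)=n^{-1/2}\sum_i \psi(Y_i)+o_p(1)$ with influence function $\psi(y)=\{1/2-I(y\le\nu)\}/f(\nu)$, which uses $f(\nu)\ne 0$ from (C2) and the local Lipschitz control of $F^{-1}$ from (C1); conditions (C3) furnish the tail regularity on $(F^{-1})'$ needed to certify that the remainders in the quantile-process expansion are uniformly negligible.

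Next I would combine these into a single linearization of the numerator. Since $\xi_i:=(Y_i-\mu)-\psi(Y_i)$ are i.i.d. with $\e[\xi]=0$, the CLT gives $\sqrt n\{(\hat\mu-\hat\nu)-(\mu-\nu)\}\xrightarrow{d} N(0,\mathrm{Var}(\xi))$. A direct covariance computation, using $\mathrm{Var}(\psi(Y))=\{4f(\nu)^2\}^{-1}$ and $\mathrm{Cov}(Y,I(Y\le\nu))=\gamma-\mu/2$ together with the identity $\tau=\mu-2\gamma$, yields $\mathrm{Var}(\xi)=\sigma^2+\{4f(\nu)^2\}^{-1}-\tau/f(\nu)$. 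For the denominator I would need only consistency: $\hat d=\sqrt{\pi/2}\,\hat\tau\xrightarrow{p}\sqrt{\pi/2}\,\tau$, where the plug-in of $\hat\nu$ for $\nu$ inside $\hat\tau$ contributes only at lower order because the map $m\mapsto\e|Y-m|$ is stationary at its minimizer $m=\nu$ (its subgradient vanishes at the median).

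Assembling by Slutsky's theorem and the delta method for a ratio gives $\sqrt n\{T-\lambda/(\sqrt{\pi/2}\,\tau)\}\xrightarrow{d} N(0,\mathrm{Var}(\xi)/\{(\pi/2)\tau^2\})$, where $\lambda=\mu-\nu$ is the population mean--median gap and $\lambda/(\sqrt{\pi/2}\,\tau)$ is the probability limit of $T$. The delta-method coefficient on the denominator fluctuation is proportional to $\lambda$; hence under the symmetry hypothesis $\mu=\nu$, so $\lambda=0$, the randomness of $\hat d$ multiplies a vanishing factor and drops out, leaving exactly the limiting variance $\sigma_T^2=\frac{2}{\pi\tau^2}\bigl(\sigma^2+\{4f(\nu)^2\}^{-1}-\tau/f(\nu)\bigr)$.

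The hard part is not the covariance algebra but the verification of the two linearizations under the stated regularity conditions and the control of the interaction between the estimated median and the MAD. The delicate step is showing that the $o_p(n^{-1/2})$ remainders---from the Bahadur--Kiefer expansion of $\hat\nu$ and from the extreme order statistics entering $\hat\mu$ as an $L$-statistic---are genuinely negligible; this is precisely where the tail-ratio bounds in (C3) and the integrability in (C2) do the real work, and it is the reason the statement is kept at the generality of Corollary~1 of \citet{miao2006sym} rather than under simple moment assumptions, to which I would defer for the full remainder estimates.
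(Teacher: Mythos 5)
The paper does not prove this statement at all: it is imported verbatim, with citation, as Corollary~1 of \citet{miao2006sym}, so there is no in-paper proof to compare against. Your sketch is the standard (and correct) route to that result --- Bahadur linearization of the median with influence function $\{1/2-I(y\le\nu)\}/f(\nu)$, the exact linearization of the mean, the covariance algebra giving $\sigma^2+\{4f(\nu)^2\}^{-1}-\tau/f(\nu)$ via $\tau=\mu-2\gamma$, consistency of $\hat d$ using stationarity of $m\mapsto\e|Y-m|$ at the median, and Slutsky with the denominator fluctuation killed by $\lambda=0$ under symmetry --- and it matches the argument of the cited source, with the remainder control correctly deferred to conditions (C1)--(C3).
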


Next, we justify that calculating the test statistic for the truncated datasets is reasonable.
\begin{theorem}
    If a symmetric distribution satisfies the conditions of Theorem \ref{thm:cond}, all symmetrically truncated distributions also follow the conditions.
\end{theorem}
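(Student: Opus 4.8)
The plan is to reduce the claim to a change-of-variables identity for quantile functions and then exploit the fact that, for a \emph{symmetric} truncation, the endpoints of the truncated quantile function correspond to \emph{interior} points of the original one, where the three regularity conditions are easy to verify. First I would reduce to the case $\nu=0$: every hypothesis in Theorem~\ref{thm:cond}, as well as the statistic $T$ in \eqref{TS}, is invariant under translation, so we may take the center of symmetry to be the origin. A symmetric truncation then restricts $F$ to an interval $[-c,c]$; writing $\alpha = F(c) \in (1/2,1)$ and using $F(-c) = 1-\alpha$, the truncated CDF is $G(y) = \big(F(y)-(1-\alpha)\big)/(2\alpha-1)$ on $[-c,c]$, which is again symmetric about $0$, so the theorem is applicable in principle and it remains to check (C1)--(C3). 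Inverting $G$ gives the clean identity
\begin{equation}
G^{-1}(u) = F^{-1}\big(\psi(u)\big), \qquad \psi(u) = (1-\alpha) + (2\alpha-1)\,u,
\end{equation}
where $\psi$ is an increasing affine bijection from $(0,1)$ onto the compact interior subinterval $(1-\alpha,\alpha)\subset(0,1)$, and $(G^{-1})'(u) = (2\alpha-1)\,(F^{-1})'(\psi(u))$.

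Next I would verify (C1) and (C2) for $G$. Since $\psi$ is affine and carries $(0,1)$ into a subinterval bounded away from $\{0,1\}$, the composition $F^{-1}\circ\psi$ inherits continuity and the local Lipschitz property from $F^{-1}$ on $(1-\alpha,\alpha)$, and any discontinuities of $G^{-1}$ are the $\psi$-preimages of those of $F^{-1}$, hence still a measure-zero subset; this gives (C1). For (C2), the truncated density satisfies $g(0) = f(0)/(2\alpha-1) = f(\nu)/(2\alpha-1) \neq 0$, and convergence of $\int_0^1 (G^{-1})'(u)[u(1-u)]^{1/2}\,du$ is immediate because $(F^{-1})'$ is bounded on the compact interval $[1-\alpha,\alpha]$ while $[u(1-u)]^{1/2}$ is integrable.

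The conceptual heart is (C3), which constrains $(G^{-1})'$ as $u\to 0^+$ and $u\to 1^-$. The key observation is that $\psi(u)\to 1-\alpha$ and $\psi(u)\to\alpha$ respectively---both \emph{interior} points of $(0,1)$. Consequently the tail behaviour of $G^{-1}$ is governed by $(F^{-1})'$ on a neighborhood of an interior point, where the local Lipschitz bound from (C1) gives an upper bound on $(F^{-1})'$ and boundedness of the density gives a positive lower bound. Writing
\begin{equation}
\frac{(G^{-1})'(u_1)}{(G^{-1})'(u_2)} = \frac{(F^{-1})'(\psi(u_1))}{(F^{-1})'(\psi(u_2))},
\end{equation}
both arguments lie in a fixed neighborhood of $1-\alpha$ (or of $\alpha$), so the ratio is bounded above and below by $\sup(F^{-1})'/\inf(F^{-1})'$ over that neighborhood; one may take $M_i$ equal to this quantity, uniformly in $k_i$. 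Thus (C3) holds---indeed more easily than for $F$ itself, since the comparability hypothesis on $u_1/u_2$ is not even needed.

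The main obstacle is the positive lower bound on $(F^{-1})'$ near interior points, equivalently an upper bound on the density $f$ over the compact set $[-c,c]$. The local Lipschitz condition (C1) supplies the complementary bound ($f$ bounded below, hence $(F^{-1})'$ bounded above), but the upper bound on $f$---needed so that the derivative ratios do not degenerate---is the delicate ingredient. I would secure it either by adopting the mild standing assumption that $f$ is continuous on its support (true for all the listed examples: the $t_3$, Gaussian, logistic, and double-exponential laws), so that $(F^{-1})' = 1/f(F^{-1})$ is continuous and strictly positive and therefore attains a positive minimum and finite maximum on each interior compact set, or by extracting the same two-sided control directly from conditions (C1) and (C3) for $F$ evaluated at the interior points $1-\alpha$ and $\alpha$. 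Finally I would dispose of the degenerate truncations (no truncation, or vanishing truncation width) as trivial boundary cases.
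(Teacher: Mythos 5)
Your proposal is essentially correct and rests on the same core computation as the paper: writing the truncated quantile function as $G_t^{-1}=G^{-1}\circ l$ for an increasing affine map $l$ onto an interior subinterval of $(0,1)$, from which (C1) follows by composition and (C2) follows from $g_t(\nu)=g(\nu)/(G(2m-a)-G(a))\neq 0$ together with boundedness of the truncated support. Where you diverge is (C3). The paper's route is to observe that the constant factor cancels in the derivative ratio and that the comparability hypothesis $k^{-1}<u_1/u_2<k$ is \emph{preserved} under $l$ (a short computation using $l(0)=G(a)\geq 0$ and $k>1$), after which it reinvokes the original condition (C3) at the mapped points; your route instead notes that $l(u_1),l(u_2)$ land in a fixed compact neighborhood of the interior points $G(a)$ and $G(2m-a)$ and bounds the ratio by $\sup(F^{-1})'/\inf(F^{-1})'$ there, which makes the comparability hypothesis unnecessary. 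The trade-off is real: your argument requires a positive lower bound on $(F^{-1})'$ at interior points, i.e.\ an upper bound on the density, which the stated conditions do not literally supply and which you honestly patch with a continuity assumption on $f$; the paper's argument avoids that extra assumption formally, but applies (C3) to points $l(u_i)\to G(a)$ that are no longer near $\{0,1\}$ when $a$ is interior to the support, so it tacitly needs interior regularity of the same kind you made explicit. Neither gap is serious for the distributions the paper has in mind (all have continuous, bounded, strictly positive densities), but your version surfaces the needed hypothesis more transparently, at the cost of adding it.
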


\begin{proof}
    Define the cumulative distribution function(CDF) $G$, the probability density function(PDF) $g$, and the mean $m$. Then, the PDF of the truncated distribution $g_t$ is 
    \begin{equation}
        g_t(y) = \frac{g(y)}{\int_{a}^{2m-a}g(y)} I_{[a, 2m-a]}(y)= \frac{g(y)}{G(2m-a) - G(a)}I_{[a, 2m-a]}(y),
    \end{equation}
    
    and the CDF $G_t$ is 
    \begin{equation}
        G_t(y) = \frac{G(y) - G(a)}{G(2m-a) - G(a)}.
    \end{equation}
    Thus, we obtain 
    \begin{equation}
        G_t^{-1}(u) = G^{-1}(u (G(2m-a) - G(a)) + G(a)) =: G^{-1}(l(u))
    \end{equation} where $l(u) = (G(2m-a) - G(a)) u+ G(a)$ is a linear function of $u$.
    
    Since $G_t^{-1}$ is the composition of $G^{-1}$ and $l$, $G_t^{-1}$  satisfies the first part of Condition 1. Also, the boundary of the interval $[a, 2m-a]$ has Lebesgue measure zero, so
    $\{x:G_t^{-1}(x) \text{ exists and is continuous} \} = \{x:G^{-1}(x) \text{ exists and is continuous} \} \cap [a, 2m-a]$
    satisfies the second part of Condition 1.
    
    Next, we have $G_t(\nu) = \frac{g(\nu)}{G(2m-a) - G(a)} I_{[a, 2m-a]}(y) \neq 0$,
    and $\int_0^1 (G_t^{-1})'(u)[u(1-u)]^{\frac{1}{2}}du$ exists since the support of $G_t$ is bounded, so condition 2 holds for $G_t$.
    
    Finally, 
    \begin{equation}
        (G_t^{-1})'(u) = \frac{1}{G_t'(G_t^{-1}(u))} = \frac{G(2m-a) - G(a)}{G'(G^{-1}(l(u)))} = (G(2m-a) - G(a))(G^{-1})'(l(u)),
    \end{equation}
    and under the condition $0 < u_1, u_2 < \delta_1$ and $k_1^{-1} < \frac{u_1}{u_2} < k_1$ 
    for $k_1>1$,
    \begin{equation}
        \frac{l(u_1)}{l(u_2)} = \frac{(G(2m-a) - G(a)) u_1 + G(a)}{(G(2m-a) - G(a)) u_2 + G(a)} < \frac{(G(2m-a) - G(a)) k_1 u_2 + k_1 G(a)}{(G(2m-a) - G(a)) u_2 + G(a)} = k_1,
    \end{equation}
    \begin{equation}
        \frac{l(u_1)}{l(u_2)} = \frac{(G(2m-a) - G(a)) u_1 + G(a)}{(G(2m-a) - G(a)) u_2 + G(a)} > \frac{(G(2m-a) - G(a)) u_2/k_1 + G(a)/k_1}{(G(2m-a) - G(a)) u_2 + G(a)} = \frac{1}{k_1}
    \end{equation}
    and similar for the conditions under condition(3-2).
\end{proof}

\textbf{Theorem} \ref{thm:trim} gives an insight of the \bf Algorithm} \ref{alg:alg1} which makes 
the remaining data to be symmetric after truncations. 
More specifically, 
when the distribution of 
remaining data after truncation is 
still skewed, the {\bf Algorithm} \ref{alg:alg1} has very little chance to 
truncate the data in the wrong side so that 
the {\bf Algorithm} \ref{alg:alg1} can have more chance to truncate the data  in the long tail or stop truncation.   
%a weak justification for our algorithm, explaining that the trimming does not occur in the wrong side. 
%In the following, suppose the \textbf{Algorithm} \ref{alg:alg1} is applied to the data $y_1 < y_2 < \cdots < y_n$ where $y_{(1)} , y_{(2)} , \cdots, y_{(n)}$ are i.i.d samples from a distribution with mean $\mu$ and $\nu$, and $\alpha$ is the confidence level for each step.
\begin{theorem}
    If $\mu > \nu$, the procedure does not trim the smaller side(the wrong side) with probability tending to 1 as $n \rightarrow \infty$. For the case of $\mu < \nu$, we have a similar result that 
    the procedure does not trim the larger side with probability tending to 1 as $n \rightarrow \infty$. 
    \label{thm:trim}
\end{theorem}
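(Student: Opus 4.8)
The plan is to show that, when the current local population obeys $\mu>\nu$, the event that triggers a left-tail (wrong-side) deletion in {\bf Algorithm}~\ref{alg:alg1} has probability tending to zero. Reading off the algorithm, a deletion from the smaller (left) side is executed exactly when $T<-1.96\,\hat\sigma_T(X_j)$, whereas by the zero assumption {\bf A2} the alternative mass sits in the right tail, so the right tail is the ``correct'' side to trim. Hence the statement reduces to proving $\p\big(T<-1.96\,\hat\sigma_T\big)\to 0$ for each configuration with $\mu>\nu$, and then to propagating this bound across the deletions performed by the algorithm.

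The core is a consistency-plus-rate argument. Since $\delta$ is fixed, the neighborhood $\mathcal N_\delta(X_j)$ contains $\asymp n$ points, so the law of large numbers gives $\hat\mu(X_j)\xrightarrow{p}\mu$, the standard consistency of the sample median gives $\hat\nu(X_j)\xrightarrow{p}\nu$, and the scale estimate satisfies $\hat d(X_j)\xrightarrow{p} d>0$. Consequently $T\xrightarrow{p} c>0$, where $c$ is the nonzero centering constant of {\bf Theorem}~\ref{thm:cond} (proportional to $\mu-\nu$), while $\hat\sigma_T$, being the standard error of $T$, is $O_p(n^{-1/2})$ and hence $1.96\,\hat\sigma_T\to 0$. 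Thus the wrong-side event forces $T$ below a vanishing threshold while $T$ concentrates at $c>0$, which already yields $\p(T<-1.96\,\hat\sigma_T)\to 0$. To obtain a usable rate I would invoke the asymmetric-centering conclusion of {\bf Theorem}~\ref{thm:cond}, namely $\sqrt{n}\,(T-c)\xrightarrow{d} N(0,\sigma_T^2)$, so that up to negligible terms the event is a Gaussian left tail evaluated at a threshold diverging to $-\infty$ like $-\sqrt{n}\,c/\sigma_T$. A Gaussian tail estimate (or a Berry–Esseen-backed version of it) then gives
\[
\p\big(T<-1.96\,\hat\sigma_T\big)\;\le\;\exp\!\Big(-\tfrac{n c^2}{2\sigma_T^2}\Big)\,(1+o(1)),
\]
which is exponentially small in $n$.

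Finally, I would handle the iterative nature of the procedure. The algorithm performs at most $|\mathcal N_\delta(X_j)|=O(n)$ deletions, so a union bound over the steps combined with the exponential single-step estimate gives total wrong-side probability $\le O(n)\exp(-n c^2/(2\sigma_T^2))\to 0$. The main obstacle is precisely this sequential, adaptive dependence: each deletion alters the working population and the associated constants $(\mu,\nu,c)$, and the deleted order statistics and the recomputed $T$ are not independent across steps. My plan is to observe that trimming the right tail moves the mean toward the median monotonically, so along the trajectory either $\mu>\nu$ persists with $c$ bounded away from zero until symmetry is approximately reached (in which case the exponential bound applies uniformly over the truncated populations) or the algorithm halts; this uniformity over the data-dependent sequence of truncations is exactly why the exponential rate, rather than mere convergence in distribution, is needed to close the union bound. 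The case $\mu<\nu$ follows by the reflection $Y\mapsto -Y$, which interchanges the two tails and reduces it to the case already treated.
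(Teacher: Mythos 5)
Your single-step argument is essentially the paper's own proof. The paper writes the wrong-side event as $\bigl\{\hat\mu_n-\hat\nu_n<\hat d_n Z_{1-\alpha/2}\bigr\}$ with $Z_{1-\alpha/2}<0$, recenters $T_n$ at $(\mu-\nu)/d>0$, and concludes from the asymptotic normality of $\sqrt{n}\bigl(T_n-(\mu-\nu)/d\bigr)$ that the probability vanishes because the recentered threshold diverges to $-\infty$; the $\mu<\nu$ case is handled by symmetry exactly as you do. That per-step statement is all the paper proves — it makes no attempt to propagate the bound across the sequence of deletions.

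The portion you add on top of this — the exponential single-step rate and the union bound over the $O(n)$ deletion steps — contains a genuine gap. Convergence in distribution, even reinforced by a Berry--Esseen bound, only yields $\p\bigl(\sqrt{n}(T_n-c)<-\sqrt{n}c'\bigr)\le \Phi\bigl(-\sqrt{n}c'/\sigma_T\bigr)+Cn^{-1/2}$, and the $Cn^{-1/2}$ approximation error dominates the Gaussian tail; so the per-step probability you can actually extract from the tools you invoke is $O(n^{-1/2})$, not $\exp\bigl(-nc^2/(2\sigma_T^2)\bigr)$. A union bound over $O(n)$ steps then gives a bound of order $n^{1/2}$, which is vacuous. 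Closing your stronger, trajectory-level claim would require genuine finite-sample concentration inequalities for the trimmed mean, median and MAD (feasible under boundedness via Hoeffding/DKW-type arguments, but not supplied by {\bf Theorem}~\ref{thm:cond}), together with a proof — rather than an assertion — that $\mu-\nu$ remains bounded away from zero uniformly along the data-dependent sequence of truncations. Since the theorem as the paper proves it is only the per-step statement, the matching core of your proposal is sound; the uniform-over-steps extension is not established by the argument you give.
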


\begin{proof}
    First, observe that 
    \begin{align}
        & \mathbb{P}(\mbox{ The minimum value is trimmed } ) \\
        = \ & \mathbb{P}\left( \frac{\hat{\mu}_n - \hat{\nu}_n}{\hat{d_n}} < Z_{1 - \frac{\alpha}{2}} \right) \\
        = \ & \mathbb{P}\left( \sqrt{n}(T_n - \frac{\mu - \nu}{d}) < \sqrt{n} \left( Z_{1-\frac{\alpha}{2}} - \frac{\mu - \nu}{d} \right) \right) \\
        \leq \ & \mathbb{P}(\sqrt{n}(T_n - \frac{\mu - \nu}{d}) < \sqrt{n} Z_{1-\frac{\alpha}{2}} )
    \end{align}
    where $Z_{\alpha} $ satisfies $\mathbb{P}(Z>Z_\alpha) = \alpha$ with $Z$ the standard normal distribution. 
    Since $\sqrt{n}(T_n - \frac{\mu - \nu}{d})$ converges to $N(0,\sigma_T^2)$ in distribution and $\sqrt{n} C_{1-\frac{\alpha}{2}}$ diverges to $-\infty$, we conclude that the procedure trims the smallest value with probability tending to 0 as $n \rightarrow \infty$.
\end{proof}

Finally, we provide a sufficient condition for the
{\bf Algorithm} \ref{alg:alg1} not to terminate 
when the distribution of remaining data 
is not symmetric enough yet. 

Our baseline idea is that 
there exists a distribution 
$G_0$ such that the center of 
$G_0$ is $\mu_0=m(x)$ due to the zero assumption in Assumption {\bf A2}. 
While we repeat truncation of the data, 
the remaining data may not be symmetric, so we 
consider two distributions in the left and right tails 
which explain such asymmetric distribution of the whole remaining data. 
We introduce those two distributions denoted by 
$G_1$ and $G_2$.

%In the following setting, we suppose that we can find an appropriate symmetric distribution $G_0$ in the center and $G_1$  and $G_2$ 
%are the distribution in the left and right tails 
%in the tail region of $G_0$ in which 
Summarizing this, we can consider the remaining data as an i.i.d sample from the mixture model $G = \pi_0G_0 + \pi_lG_l + \pi_rG_r$ 
where $\pi_0$, $\pi_l$ and $\pi_r$ are 
the mixing proportions. Intuitively, the following {\bf Theorem} \ref{thm:suff} indicates that if the asymmetric parts of both tails have significantly different means or proportions, the {\bf algorithm} \ref{alg:alg1} will not stop trimming. For example, if $\pi_l = \pi_r$ and $\mu_r - \mu_0 > \mu_0 - \mu_l$ (that is, the two distributions in the tail region have the same proportion and $G_l$ is located further from $G_0$ than $G_r$ is), then trimming occurs from $G_l$ with probability tending to 1 as $n$ increases 
while the distribution of the remaining data is not yet symmetric enough.
\begin{theorem}
    Suppose the trimmed distribution can be expressed as 
    $G = \pi_0G_0 + \pi_lG_l + \pi_rG_r$ where $G_0$ is symmetric with respect to $\mu_0$, $G_l$ is supported at $(-\infty, \mu_0)$ with mean $\mu_l$, and $G_r$ is supported at $(\mu_0, \infty)$ with mean $\mu_r$, and $\pi_0 + \pi_l + \pi_r = 1$. If $\pi_l > \pi_r$ and either of the two conditions (i) $\pi_l( \frac{L}{2\pi_0} - (\mu_0 - \mu_l) ) < \pi_r(\frac{L}{2\pi_0} - (\mu_r - \mu_0))$ or 
    (ii) $\pi_l(\mu_0 - \mu_l) < \pi_r(\mu_r - \mu_0)$ is satisfied, {\bf Algorithm} \ref{alg:alg1} does not stop at the moment with probability tending to 1 as $n \rightarrow \infty$. Here, $L$ is the Lipschitz constant of $G_0^{-1}$ on $(0,1)$ (The opposite is similar for $\pi_l < \pi_r$.)
    \label{thm:suff}
\end{theorem}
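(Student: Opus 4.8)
The plan is to reduce the probabilistic claim to a deterministic fact about the mean and median of the mixture $G$, and then promote that fact to a statement about the while-loop via the asymptotics of Theorem~\ref{thm:cond}. Write $a = \mu_0-\mu_l>0$ and $b=\mu_r-\mu_0>0$. First I would compute the mean directly, $\mu = \pi_0\mu_0+\pi_l\mu_l+\pi_r\mu_r = \mu_0-\pi_l a+\pi_r b$, so that $\mu_0-\mu=\pi_l a-\pi_r b$. To locate the median I would evaluate $G$ at the center: since $G_l$ is supported on $(-\infty,\mu_0)$ and $G_r$ on $(\mu_0,\infty)$ we have $G_l(\mu_0)=1$, $G_r(\mu_0)=0$, and $G_0(\mu_0)=1/2$, hence $G(\mu_0)=\tfrac{\pi_0}{2}+\pi_l=\tfrac12+\tfrac{\pi_l-\pi_r}{2}$. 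Because $\pi_l>\pi_r$ this exceeds $1/2$, forcing the median to satisfy $\nu<\mu_0$.

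The technical core is to control how far to the left of $\mu_0$ the median can lie, and this is exactly where the Lipschitz constant $L$ of $G_0^{-1}$ enters. Since $\nu<\mu_0$ gives $G_r(\nu)=0$, the defining equation $G(\nu)=1/2$ reads $\pi_0 G_0(\nu)+\pi_l G_l(\nu)=1/2$. Using the crude bound $G_l(\nu)\le 1$ yields $G_0(\nu)\ge (1/2-\pi_l)/\pi_0$, so that $1/2-G_0(\nu)\le (\pi_l-\pi_r)/(2\pi_0)$. Applying the Lipschitz estimate to $\mu_0-\nu=G_0^{-1}(1/2)-G_0^{-1}(G_0(\nu))$ then produces
\[
\mu_0-\nu \;\le\; L\bigl(1/2-G_0(\nu)\bigr)\;\le\;\frac{L(\pi_l-\pi_r)}{2\pi_0}.
\]

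With the mean identity and this median bound in hand, I would split into the two hypotheses. Under (ii), $\pi_l a<\pi_r b$ means $\mu_0-\mu=\pi_l a-\pi_r b<0$, so $\mu>\mu_0>\nu$ and the gap $\mu-\nu$ is positive and bounded away from zero. Under (i), rearranging $\pi_l(\tfrac{L}{2\pi_0}-a)<\pi_r(\tfrac{L}{2\pi_0}-b)$ gives $\pi_l a-\pi_r b>\tfrac{L(\pi_l-\pi_r)}{2\pi_0}\ge \mu_0-\nu$; since the left-hand side equals $\mu_0-\mu$, this yields $\mu_0-\mu>\mu_0-\nu$, i.e. $\mu<\nu$, again with a strictly positive gap. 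In either case $\mu\neq\nu$, and the sign of $\mu-\nu$ records which tail the procedure attacks.

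Finally I would transfer the population gap to the algorithm. By consistency of the sample mean, sample median, and the dispersion estimate, $\hat\mu(x)-\hat\nu(x)\to\mu-\nu\neq0$ while the denominator converges to a positive constant, so the studentized quantity entering the stopping rule carries a deterministic drift of order $\sqrt{n}\,(\mu-\nu)$. Theorem~\ref{thm:cond}, applied with the nonzero centering $\lambda/(\sqrt{\pi/2}\,\tau)$ appropriate to an asymmetric $G$, shows the fluctuation about this drift is only $O_p(1)$; hence the statistic diverges to $\pm\infty$ with the sign of $\mu-\nu$, the condition $|T|>1.96\,\hat\sigma_T$ holds with probability tending to $1$, and the loop does not stop — trimming from the larger side under (ii) and from the smaller side under (i). I expect the main obstacle to be the median step: the bound $G_l(\nu)\le1$ must be oriented correctly (lower-bounding $G_0(\nu)$ so as to upper-bound $\mu_0-\nu$) and shown to interact cleanly with the Lipschitz constant, and one must confirm that Theorem~\ref{thm:cond} indeed supplies the $O_p(1)$ fluctuation in the asymmetric regime; the remaining case analysis is then pure bookkeeping.
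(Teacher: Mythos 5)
Your proposal is correct and follows essentially the same route as the paper: reduce to the sign of $\mu-\nu$, compute $\mu=\mu_0-\pi_l(\mu_0-\mu_l)+\pi_r(\mu_r-\mu_0)$, bound the median via $G_l(\nu)\le 1$ so that $G_0(\nu)\ge \tfrac{1}{2\pi_0}-\tfrac{\pi_l}{\pi_0}$ and hence $\mu_0-\nu\le \tfrac{L(\pi_l-\pi_r)}{2\pi_0}$ by the Lipschitz property of $G_0^{-1}$, then split on conditions (i) and (ii) and invoke the asymptotics of the studentized statistic. If anything, your treatment of case (ii) via $\mu>\mu_0>\nu$ is cleaner than the paper's, whose final implication there uses the lower bound on $\nu$ in a direction that does not strictly deliver $\mu>\nu$.
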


\begin{proof}
    Define $\mu,\nu$ as the mean, median of $G$, respectively. $G_0^{-1}$ is an increasing function, and it can be easily shown that $\frac{1}{2}> \frac{1}{2\pi_0} - \frac{\pi_l}{\pi_0}$ through a simple calculation. Asymptotically, the sign of $\mu - \nu$ determines the side on which the trimming occurs. First, from the definition of the Lipschitz constant and \textbf{Theorem} \ref{thm:trim}, we conclude that condition (i) is a sufficient condition for the maximum value to be trimmed:
    \begin{align*}
        & \frac{\pi_l(\mu_0 - \mu_l) - \pi_r(\mu_r - \mu_0)}{\frac{1}{2} - (\frac{1}{2\pi_0} - \frac{\pi_l}{\pi_0})} > L > \frac{G_0^{-1}(\frac{1}{2}) - G_0^{-1}(\frac{1}{2\pi_0} - \frac{\pi_l}{\pi_0})}{\frac{1}{2} - (\frac{1}{2\pi_0} - \frac{\pi_l}{\pi_0})} \\
        \Rightarrow \ & \pi_l(\mu_l - \mu_0) + \pi_r(\mu_r - \mu_0) < G_0^{-1}(\frac{1}{2\pi_0} - \frac{\pi_l}{\pi_0}) - G_0^{-1}(\frac{1}{2})\\
        \Rightarrow \ & \pi_0\mu_0 + \pi_l\mu_l + \pi_r\mu_r < G_0^{-1}(\frac{1}{2\pi_0} - \frac{\pi_l}{\pi_0}) \\
        \Rightarrow \ & \mu < \nu
    \end{align*}
    Similarly, a sufficient condition for the minimum value to be trimmed is given as condition (ii), justified by the following equations.
    \begin{align*}
        & \pi_l(\mu_0 - \mu_l) - \pi_r(\mu_r - \mu_0) < 0 < G_0^{-1}(\frac{1}{2}) - G_0^{-1}(\frac{1}{2\pi_0} - \frac{\pi_l}{\pi_0}) \\
        \Rightarrow \ & \pi_0\mu_0 + \pi_l\mu_l + \pi_r\mu_r > G_0^{-1}(\frac{1}{2\pi_0} - \frac{\pi_l}{\pi_0}) \\
        \Rightarrow \ & \mu > \nu
    \end{align*}
    
    % \begin{equation}
    %     \pi_1(\mu_0 - \mu_1 - \frac{L}{2\pi_0}) > \pi_2(\mu_2 - \mu_0 - \frac{L}{2\pi_0})
    % \end{equation}
\end{proof}

%\textcolor{red}{\textbf{Theorem} \ref{thm:trim} and \ref{thm:suff} do not give a rigorous proof of the convergence of the algorithm, but along with \ref{thm:cond} provide a rough justification under asymptotic settings.}

% For example, suppose that the null distribution is the standard Gaussian distribution and that the alternative distribution is supported on $[1, \infty)$. To satisfy Condition 1 of Theorem (caption), $\pi_0$ should be roughly larger than 0.6. With $\pi_0 = 0.8$, Condition 2 requires $\mu_1 - \mu_0 > 1.65$, ??which does not seem to be a tight condition??.

% proves an asymptotically sufficient condition for the mean estimation to not fail.
% Specifically, for instance, if the null proportion is not large enough and the alternative distribution is also symmetric, algorithm \ref{alg:alg1} may converge to the alternative mean, rather than the null mean.

\bibliographystyle{plainnat}
\bibliography{reference}
\end{document}